\newcommand*\rot{\rotatebox{90}}
\def\startmodif{\color{black}}
\def\smod{\color{black}}
\def\emod{\color{black}} 
\def\smodc{\color{black}}
\def\emodc{\color{black}} 
\def\startmodif{}
\def\stopmodif{\color{black}}
\newcommand{\R}{\mathbb{R}}
\newcommand{\mbf}{\mathbf}
\newcommand{\trans}{T}
\newcommand{\allones}{\mathbf{1}_n}
\newtheorem{example}{Example}
\newtheorem{rem}{Remark}
\newtheorem{assum}{Assumption}
\newtheorem{thm}{Theorem}
\newtheorem{lem}{Lemma}
\newtheorem{corollary}{Corollary}
\newtheorem{proposition}{Proposition}
\newcommand{\dst}{\displaystyle}
\def\be{\begin{equation}}
\def\ee{\end{equation}}
\def\ba{\begin{array}}
\def\ea{\end{array}}
\def\eqa{\begin{eqnarray}}
\def\eqe{\end{eqnarray}}
\begin{document}
\begin{frontmatter}
\title{An internal model approach to \smod (optimal) \emod frequency regulation in power grids \smod with time-varying voltages\emod\tnoteref{t1}} 

\tnotetext[t1]{The work of S. Trip and C. De Persis is \smodc supported  by the Danish Council for Strategic Research (contract no. 11-116843) within the `Programme Sustainable Energy and Environment', under the “EDGE” (Efficient Distribution of Green Energy) research project. \emodc The work of  C.~De Persis is also supported by {\it QUICK} (The Netherlands Organization of Scientific Research) and {\it Flexiheat} (Ministerie van Economische Zaken, Landbouw en Innovatie). The work of M.~B\"{u}rger was supported by the German Research Foundation (DFG) within the Cluster of Excellence in Simulation Technology (EXC 310/2) at the University of Stuttgart. The work was carried out while he was with the Institute for Systems Theory and Automatic Control, University of Stuttgart, Pfaffenwaldring 9, 70550 Stuttgart, Germany.  An abridged version of the paper has appeared in \cite{burger.et.al.mtns14b}.}
\author[RUG]{S. Trip\corref{cor1}}\ead{s.trip@rug.nl} 
\author[Mathias]{M. B\"{u}rger}\ead{mathias.buerger@de.bosch.com}    
\author[RUG]{C. De Persis}\ead{c.de.persis@rug.nl}         
\cortext[cor1]{Corresponding author. Tel.: +31 50 363 3077}
\address[RUG]{\smod ENTEG, \emod Faculty of Mathematics and Natural Sciences, University of Groningen, Nijenborgh 4, 9747 AG Groningen, the Netherlands}             
\address[Mathias]{Cognitive Systems, Corporate Research, Robert Bosch GmbH,
Robert Bosch Campus 1, 71272 Renningen, Germany}
\begin{keyword}                           
Frequency regulation; Smart grid; Incremental passivity; Distributed output regulation.               
\end{keyword}                             

\begin{abstract}                          
 This paper studies the problem of frequency regulation in power grids under unknown and possible time-varying load changes, while minimizing the generation costs. We formulate this problem as an output agreement problem for distribution networks and address it using incremental passivity and distributed internal-model-based controllers. \startmodif Incremental passivity enables a systematic approach to study convergence to the steady state with zero frequency deviation and to design the controller in the presence of time-varying voltages, whereas the internal-model principle is applied to tackle  the uncertain
nature of the loads. \stopmodif
\end{abstract}
\end{frontmatter}

\section{Introduction}
The power grid can be regarded {as} a large interconnected network of different subsystems, called control area's. In order to guarantee reliable operation, the frequency is tightly regulated around its nominal value, e.g. 60Hz. Automatic regulation of the frequency in power grids is traditionally achieved by primary proportional control (droop-control) and a secondary PI-control. In this secondary control, commonly known as automatic generation control (AGC), each control area determines its ``Area Control Error'' (ACE) and changes its production accordingly to compensate for local load changes in order to regulate the frequency back to its nominal value and to maintain the scheduled power flows between different area's.\\
By requiring each control area to compensate for their local load changes the possibility to achieve economic efficiency is lost. Indeed, the scheduled production in the different control area's is currently determined by economic criteria relatively long in advance. To be economically efficient an accurate prediction of load changes is necessary. Large scale introduction of volatile renewable energy sources and the use of electrical vehicles will however make accurate prediction difficult as the net load (demand minus renewable generation) will change on faster time scales and by larger amounts.\\
Due to the difficulty of precisely predicting the load, the problem of designing algorithms for power generation able to maintain the network at nominal operating conditions despite the effect of unmeasured power demand and while retaining economic efficiency has attracted considerable attention and a vast literature is already available. The aim of this paper is to provide a different framework in which the problem can be tackled exploiting the incremental passive nature of the dynamical system adopted to model the power network and internal-model-based controllers (\cite{Pavlov2008}, \cite{burger.depersis.aut13}) able to achieve an economically efficient power generation control in the presence of possibly time-varying power demand. We focus on a third-order model with time-varying voltages known as ``flux-decay model" (\cite{chiang.et.al.proc.ieee1995}, \cite{machowski_power_2008}), which, although simplistic, is tractable and meaningful.

\medskip

{\em Literature review.} An up-to-date review of current research on AGC can be found in \cite{Overview}. The economic efficiency of AGC has attracted considerable attention
and the vast literature available   makes the task of providing  an exhaustive survey very difficult.
Relevant results which are close to the present paper are briefly discussed below to better emphasize our contribution.\\
%
In \cite{KTH}, distributed and centralized controllers that require the knowledge of the frequency deviations at the bus and its neighbors are proposed for the linearized version of the swing equation and shown to achieve frequency regulation while minimizing a quadratic cost function under a suitable matrix condition. An economically efficient, discrete time AGC algorithm incorporating generator constraints is proposed in \cite{apos2014} and investigated numerically.
The use of distributed proportional and proportional-integral  controllers for microgrids has been studied in \cite{simpson-porco_synchronization_2013}, \cite{guerrero2011hierarchical}
 with additional economic insights provided in \cite{dorfler2014breaking}, where, among other contributions, decentralized tertiary control strategies have been proposed.
Investigation of stability conditions for droop controllers in a port-Hamiltonian framework and in the presence of time-varying voltages was pursued in \cite{schiffer_synchronization_????}. In \cite{zhang.papa.acc13}, \cite{li_connecting_2013},  the problem of optimal frequency regulation was tackled by formulating suitable  optimal power flow problems, characterizing their solutions and then providing gradient-like algorithms that asymptotically converge to the optimum. While \cite{zhang.papa.acc13} focused on power networks with star topology, quadratic cost functions  and including equality and inequality constraints, the paper  \cite{li_connecting_2013} does not assume any specific topology for the network and considers convex cost functions, but assumes the knowledge of the power flows at the buses to guarantee the achievement of the desired steady state solution. Work relating automatic generation control and optimal load control has appeared in \cite{zhao_power_2013}, \cite{zhao2014decentralized}, with the former focusing on linearized power flows and without generator-side control and the latter removing these assumptions. \\

{\em Main contribution.} The contribution of this paper is to propose a new approach to the problem
that differs substantially from the aforementioned works.
We move along the lines of \cite{burger.depersis.aut13}, \cite{Burger2013b}, where a framework to deal with nonlinear output agreement and optimal flow problems for dynamical networks has been proposed. In those papers
internal-model-based dynamic controllers have been designed to solve output agreement problems for networks of incrementally passive systems (\cite{Pavlov2008})
in the presence of time-varying perturbations. In this paper we build upon \cite{burger.et.al.mtns14b}. After showing that the dynamical model adopted to describe the power network is an incrementally passive system with respect to solutions that are of interest (solutions for which the frequency deviation is zero), we provide a systematic method to design internal-model-based power generation controllers that are able to balance power loads while minimizing the generation costs at steady state. \\
This design is carried out first by solving the regulator equations (\cite{Pavlov2008},\cite{burger.depersis.aut13}) associated with the frequency regulation problem. \smod Among the feedforward power generation inputs  that solve the  regulator equations, we single out the one for which the static optimal generation problem is solved. \emod Then, following  \cite{burger.depersis.aut13}, \cite{Burger2013b}, an internal-model-based incrementally passive  controller is proposed which is able  to generate in open-loop the desired feedforward input and stabilize the closed-loop system in such a way that all the solutions converge to the desired synchronous solution and to the optimal generation control. \\
Although the proposed incrementally passive  controllers share similarities with others presented in the literature, the way in which they are derived is to the best of our knowledge new. Moreover, they show a few advantages.
\\
(i)  If we allow for time-varying power demand in the model, our internal-model controllers can deal with this scenario and it turns out that  proportional-integral controllers that are more often found in the literature are a special instance of these controllers. \\
 (ii)  Being based on output regulation theory for systems over networks (\cite{burger.depersis.aut13}, \cite{Burger2013b}, \cite{Wieland2011}, \cite{Isidori2013}, \cite{DePersis2012a}), our approach has the potential to deal with fairly rich classes of external perturbations (\cite{cox2012}, \cite{Serrani2001}), thus paving the way towards frequency regulators in the presence of a large variety of consumption patterns. Furthermore, other extensions of \cite{burger.depersis.aut13} considered the presence of non-quadratic cost functions and flow capacity constraints (\cite{burger.et.al.mtns14}, \cite{burger.et.al.tcns14}) that could turn out to be useful also for the problem considered here. See \cite{li_connecting_2013}, \cite{zhao_power_2013}, \cite{zhao2014decentralized} for a different approach to deal with non-quadratic cost functions and constraints. \\
 (iii)  Passivity is an important feature shared by more accurate models of the power network, as already recognized in \cite{zonetti2012}, \cite{caliskan2014}, and in \cite{schiffer_synchronization_????} in the context of microgrids, implying that the methods that are employed  in this  paper might be used to deal with more complex (and more realistic) dynamical models. Although this level of generality is not pursued in this paper, the passivity framework allows us to include voltage  dynamics in our model, a feature that is usually neglected in other approaches (\cite{KTH}, \cite{zhang.papa.acc13}, \cite{li_connecting_2013}, but see \cite{schiffer_synchronization_????} for the inclusion of time-varying voltages in the case of microgrids, and also \cite{simpson-porco_synchronization_2013}).  Furthermore, passivity is a very powerful tool in the analysis and design of dynamical control networks
 (\cite{bai2011cooperative}, \cite{schaft2013}).\\
 (iv)  To show incremental passivity  we introduce storage functions that interestingly can be interpreted as energy functions, thus establishing a connection with classical work in the field (see e.g.~\cite{bergenTPAS81}, \cite{chiang.et.al.proc.ieee1995} and references therein) that can guide a further investigation of the problem. For instance, it can lead to the inclusion of automatic voltage regulators (\cite{chiang.et.al.proc.ieee1995}, \cite{miyagi1104229}) in the analysis, a study that is not explored in this paper.
\vspace{1em}\\
The paper is organized as follows. In Section \ref{sec.2}, we introduce the dynamical model adopted to describe the power grid. In Section \ref{sec.3}, we analyze the dynamical model assuming constant generation, and show that it leads to a nonzero frequency deviation. In Section \ref{sec.4}, we characterize the optimum generation to minimize the generation costs. { In Section \ref{sec.contr}, we propose a distributed controller which ensures frequency regulation and at the same time minimizes the generation costs under the assumption of constant demand. In Section \ref{sec.6}, the restriction of constant demand is relaxed and we extend results to the case of a certain class of time-varying demands.} In Section \ref{sec.7}, we test our controllers for an academic case study using simulations. In Section \ref{sec.8}, conclusions are given and an outline for future research is provided.

\section{System model}\label{sec.2}
The history of power grid modelling is rich and the models we adopt
can be found in most textbooks on power systems such as \cite{machowski_power_2008}.
We focus on swing equations to take into account the
frequency dynamics and, comparing to recent work which also concerns optimal frequency regulation \cite{burger.et.al.mtns14b}, \cite{dorfler2014breaking},  \cite{li_connecting_2013}, \cite{zhao2014decentralized},  we do not assume constant voltages.
We rather use an extended model which captures essential voltage dynamics \cite{chiang.et.al.proc.ieee1995},
 including constant voltages as a particular case, and possesses some incremental passivity properties
that are essential to our approach to the problem. In this work we assume that the power grid is partitioned into smaller areas, such as control areas, where the dynamic behavior of an area can be described by an equivalent single generator as a result of coherency and aggregation techniques \cite{chakrabortty2011measurement}, \cite{Ourari2006}.
As a consequence we do not distinguish between individual generator and load buses, similar to the work in \cite{KTH}, \cite{li_connecting_2013} and \cite{Zhao2014}.
This is in contrast with the structure-preserving models in e.g. \cite{chiang.et.al.proc.ieee1995} and \cite{bergenTPAS81}, where the load buses are explicitly modelled or with Kron-reduced models in e.g. \cite{schiffer_synchronization_????} and \cite{dorflersiam}, where load buses can be eliminated by modeling them as constant admittances.
\vspace{1em}\\
Consider a power grid consisting of $n$ areas. The network is represented by a connected and undirected graph $\mathcal{G} = (\mathcal{V}, \mathcal{E})$, where the nodes, $\mathcal{V} = \{1, \hdots, n\}$, represent control areas and the edges, $\mathcal{E} \subset \mathcal{V} \times \mathcal{V} = \{1,\hdots, m\}$, represent the  transmission lines connecting the areas. The network structure can be represented by its corresponding incidence matrix $D \in \mathbb{R}^{n \times m}$. The ends of edge $k$ are arbitrary labeled with a `+' and a `-'. Then
\[ d_{ik} = \left\{
  \begin{array}{l l}
    +1 & \quad \text{if $i$ is the positive end of $k$}\\
    -1 & \quad \text{if $i$ is the negative end of $k$}\\
    0 & \quad \text{otherwise.}
  \end{array} \right.\]
Every node represents an aggregated area of generators and loads
and its dynamics are described by the so called `flux-decay' or `single-axis' model.
It extends the classical second order `swing equations' that describes the dynamics for the voltage angle $\delta$ and the frequency $\omega$ by including a differential equation describing voltage dynamics. A detailed derivation can be found e.g.~in \cite{machowski_power_2008}.
\smodc
The dynamics of node $i$ are given by:
 \be\ba{rcl}\label{revsystem1}
  \dot{\delta}_{i} &=&\omega^b_{i} \\
  M_{i}\dot{\omega}^b_{i} &=& u_{i} -  \sum_{j \in \mathcal{N}_{i}} V_{i}V_{j}B_{ij} \sin \big(\delta_{i} - \delta_{j}) \\&&- A_{i}\big(\omega^b_{i}-\omega^{n}\big) - P^l_i\\
   \frac{T_{doi}}{(X_{di}- X^{'}_{di})}\dot{V}_i  &=& \frac{E_{fi}}{(X_{di}- X^{'}_{di})} - \frac{1 -  B_{ii}(X_{di}- X^{'}_{di}) }{(X_{di}- X^{'}_{di})}V_i \\&&+ \sum_{j \in \mathcal{N}_{i}} V_{\smod j \emod} B_{ij} \cos \big(\delta_{i} - \delta_{j}).
  \ea\ee
where $B$ denotes the susceptance and $\mathcal{N}_i$ is the set of nodes connected to node $i$ by a transmission line.
 In high voltage transmission networks we consider here, the conductance is close to zero and therefore neglected, i.e. we assume that the network is lossless. An overview of the used symbols is provided in Table 1. \emodc
\begin{table}\label{tab1}
\setlength{\tabcolsep}{5pt}
\center
\begin{tabular} {l  l}
\hline\noalign{\smallskip}
&State variables  \\
\hline\noalign{\smallskip}
  $\delta_{i}$ & Voltage angle \\
  $\omega^b_{i}$ & Frequency \\
  \smodc $V_i$  & Voltage \emodc\\
  \hline\noalign{\smallskip}
  &Parameters \\
  \hline\noalign{\smallskip}
  $\omega^{n}$ & Nominal frequency, e.g. 50 or 60 Hz\\
  $M_{i}$ & Moment of inertia \\
  $A_{i}$ & Damping constant \\
  $\smod T_{doi} \emod$ & Direct axis transient open-circuit constant\\
  $X_{di}$ & Direct synchronous reactance \\
   $X_{di}^{'}$ & Direct synchronous transient reactance \\
   $B_{ij}$ & Susceptance of the transmission line\\
\hline\noalign{\smallskip}
&Controllable inputs \\
\hline\noalign{\smallskip}
  $u_{i}$ & Controllable power generation \\
  \smod $E_{fi}$  & Exciter voltage \emod \\
   \hline\noalign{\smallskip}
   &Uncontrollable inputs \\
\hline\noalign{\smallskip}
      $P^{l}_{i}$ & Power demand \\\\
\end{tabular}
\caption{Description of main variables and parameters appearing in the system model.}\end{table}
In this paper we focus on (optimal) frequency regulation and in order to keep the analysis concise we assume that $E_{fi}$ is constant and do not explicitly include exciter dynamics.
 To study the interconnected power network we write system (\ref{revsystem1}) compactly for all buses $i \in \mathcal{V}$ as
  \be\ba{rcl}\label{multi-machine4}
    \dot \eta &=& D^\trans \omega \\
  M\dot \omega &=& u - D \Gamma(V) \boldsymbol{\sin} (\eta) - A\omega - P^{l} \\
  T\dot{V} &=& -E(\eta)V + \overline E_{fd}\\
  y& = & \omega,
 \ea\ee
where $\omega$ is the frequency deviation $\omega^b - \omega^n$, $D$ is the incidence matrix corresponding to the topology of the network, $\Gamma(V)={\rm diag}\{\gamma_1,\ldots, \gamma_m\}$, with $\gamma_k=V_i V_j B_{ij} = V_j V_i B_{ji}$ and the index $k$ denoting the line $\{i,j\}$,  $\eta= D^T \delta$, $\overline E_{fd} = (\frac{E_{f1}}{(X_{d1}- X^{'}_{d1})}, \hdots, \frac{E_{fn}}{(X_{dn}- X^{'}_{dn})})^T$  and $E(\eta)$ is a matrix such that $E_{ii} =  \frac{1 \smod- \emod B_{ii}(X_{di}- X_{di}^{'}) }{X_{di}- X_{di}^{'}}$ and $E_{ij} = -B_{ij}\cos(\eta_k)$, where again the index $k$ denotes the line  $\{i,j\}$. We write explicitly the relation $y=\omega$, to stress that only the frequency is measured in the system (in contrast to e.g. \cite{li_connecting_2013}, where the controller design relies on power flow measurements as well).
\begin{rem}\label{remarkE}
      In a realistic network the reactance is higher than the transient reactance, i.e. $X_{di} > X^{'}_{di} > 0$
      and the self-susceptance $B_{ii}$ satisfies $B_{ii} \smod < \emod0$ and due to the shunt susceptance $\smod |B_{ii}| \emod > \sum_{j \in \mathcal{N}_{i}} |B_{ij}|$
      .
   It follows that $E(\eta)$ is a strictly diagonally dominant and symmetric matrix with positive elements on its diagonal and is therefore positive definite.
\end{rem}
\section{Incremental passivity of the multimachine power network}\label{sec.3}
\smodc The purpose of this section is  to show that system (\ref{multi-machine4}) is incrementally passive  when we consider $u$ as the input  and  $\omega$ as the output. This property turns out to be fundamental  in the subsequent analysis pursued in this paper. \emodc While showing the incremental passivity property, a storage function is derived, 
based
upon which the forthcoming analysis of  the response of system (\ref{multi-machine4}) to the power injection $\overline u$ and the  load $P^l$ is carried out.
Following  \cite{burger.depersis.aut13}, \cite{Burger2013b}, to show incremental passivity,
system (\ref{multi-machine4}) is first interpreted as two subsystems interconnected via constraints that reflect the topology of the network. As a matter of fact,
observe that  system (\ref{multi-machine4}) can be viewed as the feedback interconnection of the system
\be\label{multi-machine4a}
\ba{rcl}
M\dot \omega &=& u + \mu- A\omega - P^l  \\
y &=& \omega
\ea\ee
with the system
\be\label{multi-machine4b}
\ba{rcl}
\dot \eta &=& v \\
T\dot{V} &=& -E(\eta)V + \overline E_{fd} \\
\lambda &=& \Gamma(V) \boldsymbol{\sin} (\eta).
\ea\ee
These systems are interconnected  via the relations
\be\label{interc.rel}\ba{rcl}
v&=& D^\trans y\\
\mu &=& -D\lambda,
\ea\ee
where the incidence matrix $D$ reflects the topology of the network.
Before studying the incremental passivity of the system
it is convenient to recall its equilibria, which we will do in the next subsection.

\subsection{Equilibria}\label{subsec.eq}
As a first step we characterize the  constant steady state solution $(\overline \eta, \overline \omega, \overline V)$ of (\ref{multi-machine4}),  with a 
 generation $u=\overline u$, and in the case in which $\overline \omega$ is a constant belonging to the space $\mathcal{N}(D^T)$,  i.e.~it is a constant vector with all elements being equal. The steady state solution necessarily satisfies
\be\label{eq}\ba{rcl}
\mathbf{0} &=& D^\trans \overline \omega\\
\mathbf{0} &=&  \overline{u} - D \Gamma(\overline V) \boldsymbol{\sin} (\overline\eta)  - A\overline \omega - P^{l} \\
\mathbf{0} &=& -E(\overline \eta)\overline V + \overline E_{fd}.
\ea\ee
Notice that $\overline\eta$ is the vector of relative voltage angles that guarantee the power exchange among the buses at steady state.
The solution to (\ref{eq}) can be characterized as follows:
\begin{lem}\label{lemma1}
If there exists $(\overline \eta, \overline \omega, \overline V) \in \mathcal{R}(D^T)\times \R^n\times \R^n_{>0}$ such that (\ref{eq}) holds, then
necessarily $\overline \omega =\allones \omega_\ast$, with
\be\label{omega.star}
 \omega_\ast =   \dst\frac{ \allones^\trans (\overline{u}- P^{l})}{\allones^\trans A \allones}
= \dst\frac{\sum_{i \in \mathcal{V}} (\overline{u}_{i}-P^{l}_i)}{\sum_{i \in \mathcal{V}} A_i},
\ee
and the vector $\overline u-P^l$ must satisfy
\be\label{feasability}
\left(\dst I - \frac{A \allones \allones^\trans}{ \allones^\trans A \allones} \right)(\overline{u} -P^{l})\in \mathcal{D},
\ee
where
\be\ba{lcc}\label{set.D}
\mathcal{D}=\{v\in \mathcal{R}(D): \\
v= D
\Gamma(\overline V) \boldsymbol{\sin}(\overline \eta), \; \overline \eta\in \mathcal{R}(D^T), \overline V \in \mathbb{R}^n_{>0}\} .
\ea\ee
\end{lem}
Notice that, in view of (\ref{eq}),  the requirement for $\overline \omega$ to be a constant vector requires the vector $u-P^l$ to be constant as well.   \stopmodif
The proof of the lemma is straightforward and is therefore omitted. A characterization of the equilibria for a related system 
has been similarly discussed in \cite{simpson-porco_synchronization_2013}, \cite{schiffer_synchronization_????}, \cite{Zhao2014} and has its antecedents in e.g.~\cite{bergenTPAS81}.
Motivated by the result above, (\ref{feasability}) is introduced as a feasibility condition that formalizes the physical intuition that the network is capable of transferring the electrical power at its steady state.
\begin{assum}\label{assum2}
For a given $\overline u-P^l$, there exist $\overline \eta\in \mathcal{R}(D^T)$, $\overline V\in \R^n_{>0}$ and $\overline E_{fd} \in \R^{n}$ for which (\ref{feasability}) is satisfied and $\mathbf{0} = -E(\overline \eta)\overline V + \overline E_{fd}$.
\end{assum}
In some specific cases, the characterization above can be made more explicit.
If the graph has no cycles, then (\ref{feasability}) holds provided that $\overline u-P^l$ and $\overline V$
are such that (\cite{simpson-porco_synchronization_2013})
\[
\| \Gamma(\overline V)^{-1}D^\dagger\left(\dst I - \frac{A \allones \allones^\trans}{ \allones^\trans A \allones} \right)(\overline{u} -P^{l})\|_\infty <1,
\]
in which case $\overline \eta$ is obtained from
\[
 \boldsymbol{\sin}(\overline \eta)=\Gamma(\overline V)^{-1}D^\dagger\left(\dst I - \frac{A \allones \allones^\trans}{ \allones^\trans A \allones} \right)(\overline{u} -P^{l}),
\]
with $D^\dagger$ the Moore-Penrose pseudo-inverse.
\subsection{Incremental passivity of (\ref{multi-machine4})}
Having characterized the steady state solution of system (\ref{multi-machine4}) and having assumed that such a steady state solution exists, we are ready to state the main result of this section concerning the incremental passivity of the system with respect to the steady state solution.
The proof of the incremental passivity of system (\ref{multi-machine4}) can be split in a number of basic steps. First, one can show that system (\ref{multi-machine4a}) is incrementally passive with respect to the equilibrium solution, namely:
\begin{proposition}\label{prop1}
System
(\ref{multi-machine4a})   with inputs $u$ and
$\mu$ and output $y = \omega$,  is an output strictly incrementally passive system with
respect to a constant
 solution $\overline \omega$.
Namely,  there exists a regular storage function
$W_1(\omega, \overline\omega)$ which satisfies the incremental dissipation inequality
$
\dot W_1(\omega, \overline\omega)= -\rho (y-\overline y)+(y-\overline y)^\trans (\mu-\overline \mu)+(y-\overline y)^\trans (u-\overline u),
$
where $\dot W_1$ represents the directional derivative of $W_1$ along the solutions to  (\ref{multi-machine4a}) and $\rho:\R^n\to \R_{\ge 0}$ is a positive definite function.
\end{proposition}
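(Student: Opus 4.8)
The plan is to build an energy-like storage function from the kinetic energy of the frequency deviations. Since $M=\diag\{M_1,\dots,M_n\}$ has strictly positive diagonal entries (the moments of inertia), I would take
$$
W_1(\omega,\overline\omega)=\frac{1}{2}(\omega-\overline\omega)^\trans M(\omega-\overline\omega),
$$
which is continuously differentiable, positive definite in $\omega-\overline\omega$, and attains its minimum at $\omega=\overline\omega$; hence it qualifies as a regular storage function.

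First I would differentiate $W_1$ along the solutions of (\ref{multi-machine4a}). Because $\overline\omega$ is constant, only the $\omega$-dependence contributes, so using $M\dot\omega=u+\mu-A\omega-P^l$ one gets
$$
\dot W_1=(\omega-\overline\omega)^\trans M\dot\omega=(\omega-\overline\omega)^\trans\big(u+\mu-A\omega-P^l\big).
$$

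Next I would inject the steady-state relation to convert everything into incremental quantities. At the constant solution $\overline\omega$ driven by $\overline u$ and interconnection value $\overline\mu$, the second equality of (\ref{eq}) reads $\mathbf{0}=\overline u+\overline\mu-A\overline\omega-P^l$, so that $P^l=\overline u+\overline\mu-A\overline\omega$. Substituting and grouping terms yields
$$
\dot W_1=(\omega-\overline\omega)^\trans(u-\overline u)+(\omega-\overline\omega)^\trans(\mu-\overline\mu)-(\omega-\overline\omega)^\trans A(\omega-\overline\omega).
$$
Since $y=\omega$ and $\overline y=\overline\omega$, this is exactly the claimed dissipation identity with $\rho(y-\overline y)=(y-\overline y)^\trans A(y-\overline y)$, which is positive definite because $A=\diag\{A_1,\dots,A_n\}$ with $A_i>0$; this also establishes the output strict incremental passivity.

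There is no genuine obstacle in this argument: the computation is short once the quadratic storage function has been chosen, and incremental passivity of a system that is affine in its inputs with a static linear damping term is essentially automatic. The only point demanding care is the bookkeeping of the steady-state input $\overline\mu$, which must be read off (\ref{eq}) as the interconnection value consistent with $\overline\omega$, so that the unknown load $P^l$ cancels exactly against the steady-state terms and the residual dissipation $-(\omega-\overline\omega)^\trans A(\omega-\overline\omega)$ is manifestly sign-definite.
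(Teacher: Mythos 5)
Your proposal is correct and coincides with the paper's own proof: the same kinetic-energy storage function $W_1(\omega,\overline\omega)=\frac{1}{2}(\omega-\overline\omega)^\trans M(\omega-\overline\omega)$, the same substitution of the steady-state balance to pass to incremental quantities, and the same identification $\rho(y-\overline y)=(y-\overline y)^\trans A(y-\overline y)$ with $A$ diagonal positive definite. The only discrepancy is immaterial: the paper states the exploited identity as $\mathbf{0}=-A\overline\omega+\overline u-\overline\mu-P^l$, which is a sign slip given the interconnection convention $\mu=-D\lambda$, and your version $\mathbf{0}=\overline u+\overline\mu-A\overline\omega-P^l$ is the one actually consistent with the computation.
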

\begin{proof}
Consider the regular storage function
$W_1(\omega, \overline\omega)=\frac{1}{2}(\omega-\overline\omega)^\trans M (\omega- \overline\omega).
$
We have
\[\ba{rcl}
\dot W_1&=&(\omega-\overline\omega)^\trans ( u + \mu - A \omega-P^l)\\
&=&(\omega-\overline\omega)^\trans (-A (\omega-\overline\omega) +(\mu-\overline \mu)+(u-\overline u))\\
&=&-(y-\overline y)^\trans A (y-\overline y)\\&&
+(y-\overline y)^\trans (\mu-\overline \mu)+(y-\overline y)^\trans (u-\overline u),
\ea\]
which proves the claim. Notice that in the second equality above, we have exploited the identity $\mathbf{0}= -A \overline \omega +\overline u- \overline \mu -P^l$.
\end{proof}
Second, we can prove a similar statement for system (\ref{multi-machine4b}) under the following condition:
 \begin{assum}\label{assum3}
   Let $\overline \eta \in (\frac{-\pi}{2}, \frac{\pi}{2})^m$ and $\overline V\in \R^n_{>0}$ be such that
\be\label{eq.assum3}\ba{rll}
 E(\overline\eta) &-& \smod{\rm diag}(\overline V)^{-1}\emod|D|\Gamma(\overline V)\text{diag}(\boldsymbol{\sin}(\overline\eta))\\&&
 \hspace{-0.5cm}\text{diag}(\boldsymbol{\cos}(\overline\eta))^{-1} \text{diag}(\boldsymbol{\sin}(\overline\eta)) |D|^T \smod{\rm diag}(\overline V)^{-1}\emod> 0,
  \ea\ee
%
  where $|D|$ is the incidence matrix with all elements positive.
 \end{assum}
The role of Assumption \ref{assum3} is to guarantee the existence of a suitable incremental storage function with respect to the constant solution $(\overline \eta, \overline V)$, as
becomes evident in the following lemma.

\begin{lem}\label{lemHessian}
  Let Assumption \ref{assum3} hold. Then the storage function
  \be \ba{rll}\label{w2} W_2(\eta, \overline\eta, V, \overline V) &=& -\mathbf{1}^T\Gamma(V)\boldsymbol{\cos}(\eta) + \mathbf{1}^T \Gamma(\overline V)\boldsymbol{\cos}(\overline \eta) \\&&- \smod \left(\Gamma(\overline V)\boldsymbol{\sin}(\overline \eta)\right)^T(\eta - \overline \eta) \\&&- \overline E_{fd}(V - \overline V)\\&& + \frac{1}{2}V^T F V - \frac{1}{2}\overline V^T F \overline V,
  \ea\ee where $F_{ii} =  \frac{1 \smod-\emod B_{ii}(X_{di}- X_{di}^{'}) }{X_{di}- X_{di}^{'}}$, has a strict local minimum at $(\overline \eta, \overline V)$.
\end{lem}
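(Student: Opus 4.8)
The plan is to prove the claim via the standard second-order sufficient conditions for a strict local minimum: I will show that $(\overline\eta,\overline V)$ is a critical point of $W_2$ and that the Hessian of $W_2$ evaluated there is positive definite, the latter being exactly what Assumption \ref{assum3} guarantees.

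First I would verify the first-order condition. Differentiating $W_2$ with respect to $\eta$, only the term $-\mathbf{1}^T\Gamma(V)\boldsymbol{\cos}(\eta)$ and the linear correction $-(\Gamma(\overline V)\boldsymbol{\sin}(\overline\eta))^T(\eta-\overline\eta)$ contribute, giving $\partial W_2/\partial\eta_k = \gamma_k\sin(\eta_k)-\overline\gamma_k\sin(\overline\eta_k)$, which vanishes at $\eta=\overline\eta$. Differentiating with respect to $V$, and using both $\gamma_k=V_iV_jB_{ij}$ and the identity $F_{ii}=E_{ii}(\eta)$, the bilinear cosine term and the quadratic term $\tfrac12 V^TFV$ combine into $\nabla_V W_2 = E(\eta)V-\overline E_{fd}$. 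Evaluated at $(\overline\eta,\overline V)$ this equals $E(\overline\eta)\overline V-\overline E_{fd}$, which is zero by the steady-state relation $\mathbf{0}=-E(\overline\eta)\overline V+\overline E_{fd}$ in (\ref{eq}). Hence $(\overline\eta,\overline V)$ is a critical point.

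Next I would compute the three Hessian blocks at $(\overline\eta,\overline V)$; the only coupling between $\eta$ and $V$ enters through the $V$-dependence of $\Gamma(V)$ in $-\mathbf{1}^T\Gamma(V)\boldsymbol{\cos}(\eta)$. Carrying out the differentiation yields the diagonal block $\nabla^2_{\eta\eta}W_2=\Gamma(\overline V)\,\text{diag}(\boldsymbol{\cos}(\overline\eta))$, the block $\nabla^2_{VV}W_2=E(\overline\eta)$ (again via $F=\text{diag}(E)$, noting $E(\eta)$ does not depend on $V$), and the cross block $\nabla^2_{V\eta}W_2=\text{diag}(\overline V)^{-1}|D|\Gamma(\overline V)\,\text{diag}(\boldsymbol{\sin}(\overline\eta))$. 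Since $\overline\eta\in(-\tfrac{\pi}{2},\tfrac{\pi}{2})^m$ gives $\cos(\overline\eta_k)>0$, and the positivity of the line susceptances together with $\overline V\in\R^n_{>0}$ makes $\Gamma(\overline V)$ positive definite, the $\eta\eta$-block is diagonal and positive definite, hence invertible, and I can take a Schur complement with respect to it.

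Finally, the Schur complement of the $\eta\eta$-block is $E(\overline\eta)-\nabla^2_{V\eta}W_2\,(\nabla^2_{\eta\eta}W_2)^{-1}\,(\nabla^2_{V\eta}W_2)^T$; since all the factors sandwiched between $|D|$ and $|D|^T$ are diagonal they commute, and after cancellation the middle collapses to $\Gamma(\overline V)\,\text{diag}(\boldsymbol{\sin}(\overline\eta))\,\text{diag}(\boldsymbol{\cos}(\overline\eta))^{-1}\text{diag}(\boldsymbol{\sin}(\overline\eta))$, so the Schur complement equals exactly the matrix on the left-hand side of (\ref{eq.assum3}). Assumption \ref{assum3} asserts that this matrix is positive definite, whence the full Hessian of $W_2$ is positive definite and the second-order sufficient condition delivers a strict local minimum at $(\overline\eta,\overline V)$. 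I expect the main obstacle to be purely in the bookkeeping: correctly forming the $V$-$\eta$ cross block (the sole nontrivial coupling, stemming from the $V$-dependence of $\Gamma$) and checking that the Schur complement reproduces the precise expression in Assumption \ref{assum3}, in particular the outer factors $\text{diag}(\overline V)^{-1}$ and the inner factor $\text{diag}(\boldsymbol{\sin}(\overline\eta))\text{diag}(\boldsymbol{\cos}(\overline\eta))^{-1}\text{diag}(\boldsymbol{\sin}(\overline\eta))$.
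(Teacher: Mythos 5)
Your proposal is correct and follows essentially the same route as the paper's own proof: verify that the gradient $\bigl(\Gamma(V)\boldsymbol{\sin}(\eta)-\Gamma(\overline V)\boldsymbol{\sin}(\overline\eta),\; E(\eta)V-\overline E_{fd}\bigr)$ vanishes at $(\overline\eta,\overline V)$, assemble the block Hessian with cross block $\mathcal{H}=\mathrm{diag}(V)^{-1}|D|\Gamma(V)\mathrm{diag}(\boldsymbol{\sin}(\eta))$, and take the Schur complement with respect to the positive-definite block $\Gamma(\overline V)\mathrm{diag}(\boldsymbol{\cos}(\overline\eta))$, which reproduces exactly the matrix in (\ref{eq.assum3}). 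You even spell out the diagonal-commutation step in the Schur complement more explicitly than the paper does, so nothing is missing.
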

\begin{proof}
  First we consider the gradient of $W_2$, which is given by
  \be\ba{rll}
  \nabla W_2 &=& (\frac{\partial W_2}{\partial \eta} \frac{\partial W_2}{\partial V})^T \\
  &=& \begin{pmatrix}
    \Gamma(V)\boldsymbol{\sin}( \eta) - \Gamma (\overline V)\boldsymbol{\sin}(\overline \eta) \\
    E(\eta)V - \overline E_{fd}
  \end{pmatrix}.\nonumber
  \ea\ee
It is immediate to see that we have $\nabla W_2|_{\eta= \overline \eta, V=\overline V} = 0$. As the gradient of $W_2$ is zero at $(\overline \eta, \overline V)$,  for $W_2$ to have a strict local minimum it is sufficient that the Hessian is positive definite at $(\overline \eta, \overline V)$.
The Hessian is given by
  \be\ba{rll}
  \nabla ^2 W_2 = \begin{pmatrix}
    \Gamma(V)\text{diag}(\boldsymbol{\cos}(\eta))
    & \smod \mathcal{H}^T(\eta,V)\emod \\
    \smod \mathcal{H}(\eta,V)\emod  & E( \eta) \nonumber
  \end{pmatrix},
  \ea\ee
  where $ \smod \mathcal{H}(\eta,V) = {\rm diag}(V)^{-1}|D|\Gamma(V)\text{diag}(\boldsymbol{\sin}(\eta))$. \emod
Since $\Gamma(V)\text{diag}(\boldsymbol{\cos}({\eta}))$ is
positive definite for $\eta \in (\frac{-\pi}{2}, \frac{\pi}{2})^m$ it follows by invoking the Schur complement that $\nabla^2 W_2|_{\eta = \overline{\eta},V =\overline{V}} > 0$ if and only if
\be\ba{rll}
 E(\overline\eta) &-& \smod{\rm diag}(\overline V)^{-1}\emod|D|\Gamma(\overline V)\text{diag}(\boldsymbol{\sin}(\overline\eta))\\&&
 \hspace{-0.5cm}\text{diag}(\boldsymbol{\cos}(\overline\eta))^{-1} \text{diag}(\boldsymbol{\sin}(\overline\eta)) |D|^T \smod{\rm diag}(\overline V)^{-1}\emod> 0.
  \nonumber
  \ea\ee
  \end{proof}
 \begin{rem}
 Assuming $\overline \eta \in (\frac{-\pi}{2}, \frac{\pi}{2})^m$ is standard in power grid stability studies and is also referred to as a security constraint \cite{dorfler2014breaking}.
Assumption \ref{assum3} is a technical condition that allows us to infer boundedness of trajectories. An analogous condition (for a related model in a different reference frame) has been proposed in \cite{schiffer_synchronization_????}.     In the case of constant voltages Assumption \ref{assum3} becomes less restrictive and only the assumption $\overline \eta \in (\frac{-\pi}{2}, \frac{\pi}{2})^m$ is required (\cite{burger.et.al.mtns14b}). We notice indeed that by setting $V = \overline V$, the storage function (\ref{w2}) reduces to $-\mathbf{1}^T\Gamma(\overline V)\boldsymbol{\cos}(\eta) + \mathbf{1}^T \Gamma(\overline V)\boldsymbol{\cos}(\overline \eta) - \smod \left( \Gamma(\overline V)\boldsymbol{\sin}(\overline \eta)\right)^T(\eta - \overline \eta)$, which is regularly used in stability studies of the power grid (see e.g.~formula (22) in \cite{bergenTPAS81}, and also \cite{zhang.papa.acc13}) and has been adopted to study the stability of constant steady states of incrementally passive systems (\cite{burger.depersis.aut13}, \cite{Burger2013b}, \cite{Burger2013}).
\end{rem}
%
We are now ready to prove that feedback path (\ref{multi-machine4b}) is incrementally passive with respect to the equilibrium when Assumption \ref{assum3} holds.
\begin{proposition}\label{prop2}
Let \smod Assumptions \ref{assum2} and \ref{assum3} \emod hold. System
(\ref{multi-machine4b}) with  input $v$ and output $\lambda$  is an incrementally passive system, with respect to the constant equilibrium  $(\overline \eta$, $\overline V)$ which fulfills (\ref{eq.assum3}).  Namely, there exists a storage function
$W_2(\eta, \overline\eta, V, \overline V)$ which satisfies the incremental dissipation inequality
$
\dot W_2(\eta, \overline\eta, V, \overline V)= -\|\nabla_{V} W_2\|^2_{T^{-1}}+  (\lambda-\overline \lambda)^\trans (v-\overline v),
$
where $\dot W_2$ represents the directional derivative of $W_2$ along the solutions to  (\ref{multi-machine4b}) and  $\|\nabla_{V} W_2\|^2_{T^{-1}}$ is the shorthand notation for $(\nabla_{V} W_2)^TT^{-1}\nabla_{V} W_2$.
\end{proposition}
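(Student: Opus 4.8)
The plan is to take as storage function exactly the $W_2$ of Lemma~\ref{lemHessian} and to exploit the gradient expressions already computed there, namely $\nabla_\eta W_2 = \Gamma(V)\boldsymbol{\sin}(\eta) - \Gamma(\overline V)\boldsymbol{\sin}(\overline\eta)$ and $\nabla_V W_2 = E(\eta)V - \overline E_{fd}$. The first of these is precisely the incremental output $\lambda-\overline\lambda$, since $\lambda=\Gamma(V)\boldsymbol{\sin}(\eta)$ and $\overline\lambda=\Gamma(\overline V)\boldsymbol{\sin}(\overline\eta)$. The second is engineered to coincide, up to a sign, with the right-hand side of the voltage dynamics, which is the structural fact that makes the whole computation go through.

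The observation that drives the proof is that the voltage subsystem of (\ref{multi-machine4b}) is a gradient flow of $W_2$: since $T\dot V = -E(\eta)V + \overline E_{fd} = -\nabla_V W_2$, we have $\dot V = -T^{-1}\nabla_V W_2$, while the $\eta$ dynamics feed in the input channel directly via $\dot\eta = v$. Differentiating $W_2$ along the trajectories by the chain rule then gives $\dot W_2 = (\nabla_\eta W_2)^\trans\dot\eta + (\nabla_V W_2)^\trans\dot V = (\lambda - \overline\lambda)^\trans v - (\nabla_V W_2)^\trans T^{-1}\nabla_V W_2$, where I use that $T$, being diagonal and positive, has a symmetric inverse. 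It remains to match the cross term with the claimed form $(\lambda - \overline\lambda)^\trans(v - \overline v)$. Because $\overline\eta$ is a constant equilibrium of $\dot\eta = v$, the corresponding steady-state input is $\overline v = 0$ (consistently, at the interconnection (\ref{interc.rel}) one has $\overline v = D^\trans\overline\omega = D^\trans\allones\,\omega_\ast = 0$). Hence $(\lambda - \overline\lambda)^\trans v = (\lambda - \overline\lambda)^\trans(v - \overline v)$, and writing $(\nabla_V W_2)^\trans T^{-1}\nabla_V W_2 = \|\nabla_V W_2\|^2_{T^{-1}}$ reproduces exactly the asserted incremental dissipation inequality.

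The substantive point, and the only place where Assumptions~\ref{assum2} and~\ref{assum3} enter, is the qualification of $W_2$ as a bona fide (locally nonnegative) storage function rather than the dissipation computation itself. Here I would invoke Lemma~\ref{lemHessian}: under Assumption~\ref{assum3}, $W_2$ has a strict local minimum at $(\overline\eta, \overline V)$, and since the constant terms are arranged so that $W_2$ vanishes there, $W_2$ is locally nonnegative with its strict minimum at the equilibrium, as needed for the stability analysis to follow. Assumption~\ref{assum2} guarantees that $(\overline\eta,\overline V)$ with $E(\overline\eta)\overline V=\overline E_{fd}$ genuinely is an equilibrium, i.e. that $\nabla_V W_2$ vanishes there.

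I expect the only real obstacle to be bookkeeping rather than anything conceptual: verifying that the partial derivative of $-\mathbf{1}^\trans\Gamma(V)\boldsymbol{\cos}(\eta)$ with respect to $V_i$, together with the quadratic term $\tfrac{1}{2}V^\trans F V$ and the linear term $-\overline E_{fd}(V-\overline V)$, reproduces the off-diagonal entries $-B_{ij}\cos(\eta_k)$ and the diagonal entries $F_{ii}$ of $E(\eta)$, so that $\nabla_V W_2 = E(\eta)V - \overline E_{fd}$ holds \emph{exactly}. This identity is what turns the voltage dynamics into a genuine gradient descent of $W_2$ and thereby produces the negative semidefinite dissipation term; once it is in hand, the proposition follows by the single chain-rule line above.
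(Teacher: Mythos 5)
Your proposal is correct and follows essentially the same route as the paper's own proof: the same storage function $W_2$, the key observation that $T\dot V=-\nabla_V W_2$ turns the voltage dynamics into a gradient flow, the chain-rule computation yielding $\dot W_2=-\|\nabla_V W_2\|^2_{T^{-1}}+(\lambda-\overline\lambda)^\trans v$, and the identification $(\lambda-\overline\lambda)^\trans v=(\lambda-\overline\lambda)^\trans(v-\overline v)$ via $\dot{\overline\eta}=\overline v=\mathbf{0}$. Your remarks on where Assumptions~\ref{assum2} and~\ref{assum3} enter (equilibrium existence and positive definiteness of $W_2$ near $(\overline\eta,\overline V)$ via Lemma~\ref{lemHessian}) also match the paper exactly.
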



\begin{proof}
Consider the storage function $W_2$ given in (\ref{w2}). Under Assumption \ref{assum3} we have that  $W_2$ is a positive definite function in a neighborhood of $(\overline \eta, \overline V)$.
Since $T \dot{V}= -\nabla_{V} W_2$, it is straightforward to check that the dissipation inequality writes as
\[
\ba{rcl}
\dot W_{2}(\eta, \overline \eta, V, \overline V) &=& -\|\nabla_{V} W_2\|^2_{T^{-1}}  + (\Gamma(V)\boldsymbol{\sin}(\eta) \\&&-\Gamma(\overline V)\boldsymbol{\sin}(\overline \eta))^\trans\dot \eta\\
&=& -\|\nabla_{V} W_2\|^2_{T^{-1}}  +  (\lambda-\overline \lambda)^\trans (v-\overline v),\\
\ea\]
where the last equality trivially holds since $\dot{\overline \eta}=\overline v=\mathbf{0}$. This proves the claim.
\end{proof}
\vspace{1em}
The interconnection of incrementally passive systems via (\ref{interc.rel}) is known to be still incrementally passive. Bearing in mind Proposition \ref{prop1} and Proposition \ref{prop2} the next theorem follows immediately, proving that system (\ref{multi-machine4}) is output strictly incrementally passive with $u$ as an input and $y=\omega$ as an output.  We can exploit this feature to further design incrementally passive controllers that generate $u$ while establishing desired properties for the overall closed-loop system.
\begin{thm}\label{c1}
Let Assumptions
 \ref{assum2} \stopmodif and \ref{assum3} hold. System
(\ref{multi-machine4}) with input $u$ and output $y = \omega$ is an output strictly incrementally passive system, with respect to the constant equilibrium  $(\overline \eta, \overline \omega, \overline V)$ which fulfills (\ref{eq.assum3}).  Namely, there exists a storage function
$U(\omega, \overline\omega,\eta, \overline \eta, V, \overline V)=W_1(\omega, \overline\omega)+W_2(\eta, \overline \eta, V, \overline V)$ which satisfies the following incremental dissipation inequality
\be\ba{rll}
\dot U(\omega, \overline\omega,\eta, \overline \eta, V ,\overline V)&=& -\rho (y-\overline y) -\|\nabla_{V} W_2\|^2_{T^{-1}} \\ && + (y-\overline y)^\trans (u-\overline u), \nonumber
\ea\ee
where $\dot U$ represents the directional derivative of $U$ along the solutions to  (\ref{multi-machine4}) and $\rho$ is a positive definite function.
\end{thm}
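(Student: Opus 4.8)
The plan is to exploit the structural observation already made in the excerpt, that system (\ref{multi-machine4}) is the feedback interconnection, through the power-preserving relations (\ref{interc.rel}), of the two subsystems (\ref{multi-machine4a}) and (\ref{multi-machine4b}). Since Proposition \ref{prop1} shows that (\ref{multi-machine4a}) is output strictly incrementally passive with storage function $W_1$, and Proposition \ref{prop2} shows that (\ref{multi-machine4b}) is incrementally passive with storage function $W_2$, the natural candidate for the composite system is $U = W_1 + W_2$. Under Assumptions \ref{assum2} and \ref{assum3} the function $W_2$ is locally positive definite about $(\overline\eta,\overline V)$, so $U$ is a legitimate regular storage function about the equilibrium $(\overline\eta,\overline\omega,\overline V)$, and it only remains to verify the asserted dissipation inequality.

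Concretely, I would differentiate $U$ along the solutions of (\ref{multi-machine4}) and simply add the two incremental dissipation inequalities furnished by the propositions, obtaining
\[
\dot U = -\rho(y-\overline y) + (y-\overline y)^\trans(\mu-\overline\mu) + (y-\overline y)^\trans(u-\overline u) - \|\nabla_V W_2\|^2_{T^{-1}} + (\lambda-\overline\lambda)^\trans(v-\overline v).
\]
The two coupling terms $(y-\overline y)^\trans(\mu-\overline\mu)$ and $(\lambda-\overline\lambda)^\trans(v-\overline v)$ are the only terms tying the subsystems together, and the entire argument hinges on their mutual cancellation. Substituting the interconnection relations (\ref{interc.rel}), i.e. $v-\overline v = D^\trans(y-\overline y)$ and $\mu-\overline\mu = -D(\lambda-\overline\lambda)$, I would rewrite these as
\[
(y-\overline y)^\trans(\mu-\overline\mu) + (\lambda-\overline\lambda)^\trans(v-\overline v) = -(y-\overline y)^\trans D(\lambda-\overline\lambda) + (\lambda-\overline\lambda)^\trans D^\trans(y-\overline y).
\]
Each summand is a scalar equal to the transpose of the other, so they cancel exactly, leaving precisely $\dot U = -\rho(y-\overline y) - \|\nabla_V W_2\|^2_{T^{-1}} + (y-\overline y)^\trans(u-\overline u)$, as claimed.

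I do not expect a genuine obstacle here: this is the standard statement that a negative-feedback interconnection of incrementally passive systems through a skew-symmetric (power-preserving) coupling is again incrementally passive, with the output strictness inherited from the $-\rho$ term of the frequency subsystem. The only points requiring care are bookkeeping ones, namely checking that the signs in (\ref{interc.rel}) are exactly those that produce cancellation rather than reinforcement of the cross terms, and recalling that it is Assumption \ref{assum3}, via Proposition \ref{prop2}, that certifies $W_2$ and hence $U$ as a valid locally positive definite storage function about the equilibrium.
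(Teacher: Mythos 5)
Your proposal is correct and coincides with the paper's own proof: Theorem \ref{c1} is stated there to follow immediately from Propositions \ref{prop1} and \ref{prop2} via the interconnection constraints (\ref{interc.rel}), which is exactly your argument of summing $W_1$ and $W_2$ and cancelling the cross terms $(y-\overline y)^\trans(\mu-\overline\mu)+(\lambda-\overline\lambda)^\trans(v-\overline v)$. Your additional bookkeeping, verifying the signs in (\ref{interc.rel}) and noting that Assumption \ref{assum3} certifies $W_2$ (hence $U$) as locally positive definite, is consistent with, and slightly more explicit than, the paper's one-line justification.
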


\begin{proof}
The results descends immediately from Propositions \ref{prop1} and  \ref{prop2}  bearing in mind the interconnection constraints (\ref{interc.rel}).
\end{proof}

\begin{rem}
A function similar to $U$ (but in a different coordinate frame) was considered in e.g.~\cite{chu1999}. Here we provide a different construction that shows that $U$ is an incremental storage function with respect to which incremental passivity is proven. Highlighting this property is crucial in the approach and analysis we pursue. Furthermore, in the forthcoming analysis, we extend the storage function $U$ with a term that takes into account the addition of the controller and use it to infer convergence properties of the overall closed-loop system.
\end{rem}

The incremental passivity property of system (\ref{multi-machine4}) established above has  the immediate consequence that the response  of the system converges to an equilibrium when the power injection $\overline{u}$ and the load $P^l$ are
such that the total imbalance $\overline{u}-P^l$ is a constant.
For the sake of completeness, the details are provided in Corollary \ref{th1} below.
 We notice that an analogous study for related systems has been investigated in
e.g.~\cite[Section 7]{burger.depersis.aut13}, \cite{simpson-porco_synchronization_2013}, \cite{schiffer_synchronization_????} and \cite{Zhao2014}. 
Here,   similarly to \cite{schiffer_synchronization_????},  the study is carried out for  a third-order model with time-varying voltages and the result is an immediate consequence of the incremental passivity of the adopted model (\cite{burger.depersis.aut13} and \cite{Burger2013b}).
\begin{corollary}\label{th1}
 Let Assumptions 1 and 2 hold. There exists a neighborhood of initial conditions around the equilibrium $(\overline\eta, \overline \omega, \overline V)$,
such that the solutions to (\ref{multi-machine4}) starting from this neighborhood converge asymptotically to an equilibrium as characterized in Lemma \ref{lemma1}.
\end{corollary}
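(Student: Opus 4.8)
The plan is to use the incremental storage function $U = W_1 + W_2$ from Theorem \ref{c1} as a local Lyapunov function and to invoke LaSalle's invariance principle. Since the corollary deals with a constant power injection $u = \overline u$ and a constant imbalance $\overline u - P^l$, the input term in the dissipation inequality of Theorem \ref{c1} vanishes (as $u - \overline u = \mathbf{0}$), leaving
\[
\dot U(\omega,\overline\omega,\eta,\overline\eta,V,\overline V) = -\rho(y-\overline y) - \|\nabla_{V} W_2\|^2_{T^{-1}} \le 0 .
\]
First I would argue that $U$ qualifies as a strict local Lyapunov function about $(\overline\eta,\overline\omega,\overline V)$: the quadratic term $W_1$ is positive definite in $\omega-\overline\omega$, and by Lemma \ref{lemHessian} (which uses Assumption \ref{assum3}) $W_2$ has a strict local minimum at $(\overline\eta,\overline V)$. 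Hence $U$ has a strict local minimum at the equilibrium, and there is a neighborhood on which $U$ is positive definite with respect to it and on which $\eta$ remains in $(\tfrac{-\pi}{2},\tfrac{\pi}{2})^m$ so that Assumption \ref{assum3} stays in force.

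Next I would fix a sufficiently small $c>0$ so that the sublevel set $\Omega_c = \{(\eta,\omega,V) : U \le c\}$ is compact and contained in this neighborhood; this $\Omega_c$ is the neighborhood of initial conditions in the statement. Because $\dot U \le 0$, the set $\Omega_c$ is positively invariant, and on a compact positively invariant set LaSalle's invariance principle applies: every solution starting in $\Omega_c$ converges to the largest invariant set $M$ contained in $\{\dot U = 0\}\cap\Omega_c$.

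I would then characterize $M$. On $\{\dot U = 0\}$, since $\rho$ is positive definite (indeed $\rho(y-\overline y)=(y-\overline y)^\trans A(y-\overline y)$ with $A>0$), the vanishing of $\rho$ forces $\omega = \overline\omega$, and the vanishing of $\|\nabla_{V}W_2\|^2_{T^{-1}}$ forces $\nabla_{V}W_2 = \mathbf 0$, i.e. $T\dot V = \mathbf 0$, so $V$ is constant. On the invariant set $M$ the identity $\omega\equiv\overline\omega$ gives $\dot\eta = D^\trans\overline\omega = \mathbf 0$ (as $\overline\omega = \allones\omega_\ast \in \mathcal{N}(D^\trans)$ by Lemma \ref{lemma1}), so $\eta$ is also constant, and $\dot\omega = \mathbf 0$ together with the second equation of (\ref{multi-machine4}) yields the power-balance relation. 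Thus every point of $M$ satisfies (\ref{eq}) and is an equilibrium of the form characterized in Lemma \ref{lemma1}; since the equilibrium is a strict local minimum of $U$, it is isolated in $\Omega_c$ for $c$ small, so $M$ reduces to $(\overline\eta,\overline\omega,\overline V)$ and the solutions converge asymptotically to this equilibrium.

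The main obstacle is establishing boundedness and positive invariance so that LaSalle is applicable, and this is exactly what forces the local nature of the result: $W_2$ is only a \emph{local} minimum (owing to the trigonometric terms and the need for Assumption \ref{assum3}), so one cannot work globally and must confine the analysis to a compact sublevel set $\Omega_c$ lying in the region where $U$ is positive definite and $\eta\in(\tfrac{-\pi}{2},\tfrac{\pi}{2})^m$. A secondary point requiring care is verifying that $M$ contains only equilibria and that the equilibrium is isolated, which is what upgrades mere convergence to the invariant set into convergence to a single equilibrium point.
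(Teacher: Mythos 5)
Your argument coincides with the paper's proof up to the application of LaSalle: the same storage function $U=W_1+W_2$, the same observation that with $u=\overline u$ the dissipation inequality gives $\dot U = -(\omega-\overline\omega)^\trans A(\omega-\overline\omega)-\|\nabla_V W_2\|^2_{T^{-1}}\le 0$, the same compact forward-invariant sublevel set built from the strict local minimum guaranteed by Assumption \ref{assum3}, and the same characterization of the invariant set ($\omega=\overline\omega$, and $V=\tilde V$, $\eta=\tilde\eta$ constants satisfying the equilibrium equations (\ref{eq})). Up to that point your proof is correct.

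The gap is your final step: you claim that because $(\overline\eta,\overline\omega,\overline V)$ is a strict local minimum of $U$ it is isolated in $\Omega_c$ for small $c$, so that the limit set collapses to the single point $(\overline\eta,\overline\omega,\overline V)$. That implication is false, and the paper devotes the remark immediately following its proof to exactly this issue: a strict local minimum of the storage function does not imply that the equilibrium of the dynamics is isolated. When the graph contains cycles there can exist $(\tilde\eta,\tilde V)\neq(\overline\eta,\overline V)$ with $D\bigl(\Gamma(\tilde V)\boldsymbol{\sin}(\tilde\eta)-\Gamma(\overline V)\boldsymbol{\sin}(\overline\eta)\bigr)=\mathbf{0}$; such points are equilibria of the closed-loop dynamics without being critical points of $U$, and they may accumulate at $(\overline\eta,\overline\omega,\overline V)$, so every sublevel set $\Omega_c$, however small, may contain a continuum of them. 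A toy example shows the failure of your inference: for $\dot x=0$, $\dot y=-y$ with $U=x^2+y^2$, the origin is a strict local minimum of $U$ and an equilibrium, yet the solution from $(x_0,y_0)$ converges to $(x_0,0)$, not to the origin. This is precisely why the corollary — and the paper's proof — asserts convergence only to \emph{an} equilibrium as characterized in Lemma \ref{lemma1}; isolation would require an extra hypothesis, such as nonsingularity of the Jacobian at the equilibrium, or an acyclic graph (in which case $D$ has full column rank and every point of the invariant set is a critical point of $U$). Dropping your last sentence and concluding convergence to the invariant set of equilibria yields exactly the statement being proved.
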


\begin{proof}
Bearing in mind Theorem \ref{c1} and setting  $u = \overline u$ and $y=\omega$,  the overall storage function $U(\omega, \overline\omega,\eta, \overline \eta, V, \overline V)=W_1(\omega, \overline\omega)+W_2(\eta, \overline \eta, V, \overline V)$ satisfies
\[\ba{rcl}
\dot U &=& -(\omega-\overline\omega)^\trans A (\omega-\overline\omega)
- (\omega-\overline \omega)^\trans \stopmodif D(\lambda-\overline \lambda) 
\\&& +(\lambda-\overline \lambda)^\trans D^\trans  (\omega-\overline \omega)\stopmodif -\|\nabla_{V} W_2\|^2_{T^{-1}} \\
&=& -(\omega-\overline\omega)^\trans A (\omega-\overline\omega) -\|\nabla_{V} W_2\|^2_{T^{-1}},
\ea\]
where we have exploited the fact that  $D^\trans \overline \omega=\mathbf{0}$, since $\overline \omega\in \mathcal{R}(\mbf{1})$. \stopmodif
As $\dot U\le 0$ and $(\overline\eta, \overline \omega, \overline V)$ is a strict local minimum as a consequence of Assumption \ref{assum3}, there exists a compact level set $\Upsilon$ around the equilibrium $(\overline\eta, \overline \omega, \overline V)$,
which is forward invariant.
 By LaSalle's invariance principle the solution starting in $\Upsilon$ asymptotically converges to the largest invariant set contained in
$\Upsilon\cap \{(\eta, \omega, V): \omega=\overline\omega, \|\nabla_{V} W_2\|=0\}$. \stopmodif Since we have $T \dot{V}= -\nabla_{V} W_2$,
 on such invariant set the system is
\[
\ba{rcl}
\dot \eta &=& \mathbf{0} \\\
\mathbf{0} &=& \overline u - A \overline \omega - D\Gamma(\tilde V) \boldsymbol{\sin} (\eta) - P^l \\
\mathbf{0} &=& -E(\eta)\tilde V+ \overline E_{fd},
\ea\]
where $\tilde V$ is a constant. From $\dot{\eta} = 0$ it follows that on the invariant set  $\eta$ is a constant $\eta = \tilde \eta$.
One can conclude that the system indeed converges to an equilibrium as  characterized in Lemma \ref{lemma1}.
\end{proof}

 \begin{rem}
   We cannot claim that $\tilde \eta = \overline \eta$ and $\tilde V = \overline V$, since the system could converge to any equilibrium within $\Upsilon$. This is due to the fact that we have not made any assumptions on the property of the equilibrium $(\overline \eta, \overline \omega, \overline V)$ being isolated. In order to establish that the equilibrium is isolated we should ask that the determinant of the Jacobian matrix at the equilibrium is nonsingular, as follows from the inverse function theorem.  This is not automatically guaranteed by $(\overline \eta, \overline \omega, \overline V)$ being a strict local minimum  of the storage function. To better elucidate this claim, first we notice that  system (\ref{multi-machine4}) can be written in the form
\be\ba{rll}\label{pH.multi-machine4}
\left(\ba{c}
\dot \eta\\ M \dot \omega \\ T \dot V
\ea\right)
=
\underbrace{\left[
\left(\ba{rrr}
\mathbf{0} & D^T & \mathbf{0} \\
-D & \mathbf{0} & \mathbf{0} \\
 \mathbf{0} & \mathbf{0} & \mathbf{0}
\ea\right)
-
\left(\ba{rrr}
\mathbf{0} & \mathbf{0} & \mathbf{0} \\
\mathbf{0} & A & \mathbf{0} \\
 \mathbf{0} & \mathbf{0} & I
\ea\right)
\right]}_{J-R} \\
\underbrace{\left(\ba{rrr}
I & \mathbf{0} & \mathbf{0} \\
\mathbf{0} & M^{-1} & \mathbf{0} \\
 \mathbf{0} & \mathbf{0} & I
\ea\right)}_{Q}\nabla U
+
\underbrace{\left(\ba{c}
\mathbf{0}\\ I \\ \mathbf{0}
\ea\right)}_{g}(u-\overline u), \nonumber
\ea\ee
 \end{rem}
  where $J$ is a skew-symmetric matrix and $R$ is a diagonal positive semi-definite matrix
 and
\[
\nabla U =
\begin{pmatrix}
\Gamma(V)\boldsymbol{\sin}( \eta) - \Gamma (\overline V)\boldsymbol{\sin}(\overline \eta) \\
M(\omega-\overline \omega)\\
E(\eta)V - \overline E_{fd}
    \end{pmatrix}.
\]
Set $u=\overline u$. Then LaSalle's invariance principle outlined in the proof above shows that
the solution converges to the largest invariant set where $\nabla U^T (J-R) \nabla U=0$, that is $\nabla U^T R \nabla U=0$. By the structure of $R$, the latter identity is equal to $\nabla_\omega U=\mathbf{0}$ (that is, $\omega =\overline \omega$) and  $\nabla_V U=\mathbf{0}$. In view of the second equation in (\ref{eq}) and of these identities,  on this largest invariant set we have $D(\Gamma(\tilde V)\boldsymbol{\sin}( \tilde \eta) - \Gamma (\overline V)\boldsymbol{\sin}(\overline \eta))=\mathbf{0}$. If $D$ has full-column rank, that is if the graph is acyclic, then $\Gamma(\tilde V)\boldsymbol{\sin}( \tilde \eta) - \Gamma (\overline V)\boldsymbol{\sin}(\overline \eta)=\mathbf{0}$. This would imply that any point on the invariant set satisfies $\nabla U = \mathbf{0}$ and it is therefore a critical point for $U$. Since we have assumed that $(\overline \eta, \overline \omega, \overline V)$ is a strict minimum for $U$ then we could conclude that every trajectory locally converges to $(\overline \eta, \overline \omega, \overline V)$. However, in the general case in which the graph is not acyclic, then there could be constant vector $(\tilde V, \tilde \eta)\ne \smod (\overline V, \overline \eta) \emod$ such that $D(\Gamma(\tilde V)\boldsymbol{\sin}( \tilde \eta) - \Gamma (\overline V)\boldsymbol{\sin}(\overline \eta))=\mathbf{0}$ (and $E(\tilde \eta)\tilde V - \overline E_{fd}=\mathbf{0}$). In this case, convergence can only be guaranteed to an equilibrium $(\tilde \eta, \overline \omega, \tilde V)$ characterized in Lemma  \ref{lemma1}, as remarked in the result above.
\section{Minimizing generation costs}\label{sec.4}
Before we address the design of controllers generating $u$, we discuss a desired optimality property the steady state input $\overline u$ should have. This is achieved by realizing that the share of total production each generator has to provide to balance the total electricity demand can be varied. Indeed, from equality (\ref{omega.star}) it can be seen that only the sum of the generators' production is important to characterize the steady state frequency. Generally, different generators have different associated cost functions, such that there is potential to reduce costs when the share of generation among the generators is coordinated in an economically efficient way (see also \cite{KTH}, \cite{dorfler2014breaking}, \cite{li_connecting_2013} and \cite{zhao2014decentralized}).
 In this section we characterize such an optimal generation that minimizes total costs. We consider only the costs of power generation $u$, as it is predominant over the \smodc excitation and transmission costs\emodc.
The corresponding network optimization problem we tackle is therefore as  follows:
\be\ba{rl}\label{optpr2}
&\dst\min_{u} C(u)=\dst \min_{u}\sum_{i \in \mathcal{V}} C_i(u_i)\\
{\rm s.t.} & 0= \allones ^T (u - P^l),
\ea\ee
where $C_i(u_i)$ is a strictly convex cost function associated to generator $i$.  Comparing the equality constraint to (\ref{omega.star}), it is immediate to see that the solution to (\ref{optpr2}) implies a zero frequency deviation at steady state. The relation of (\ref{optpr2}) with the zero steady state frequency deviation as characterized in (\ref{eq}) with $\overline \omega = 0$ will be made more explicit at the end of this section.
Following standard literature on convex optimization 
we introduce the Lagrangian function
$
L(u, \lambda)
= C(u)+\lambda \allones^T \left(
u - P^l \nonumber
\right)$,
 where $\lambda \in \R$ is the Lagrange multiplier.
Since $C(u)$ is strictly convex we have that $L(u, \lambda)$ is strictly convex in $u$
and
  concave in $\lambda$. Therefore there exists a saddle point solution to
$
\max_{\lambda}\min_{u} L(u,\lambda).
$
Applying first order optimality conditions, the saddle point $(\overline u, \overline \lambda)$ must satisfy
\be\label{eq.saddle.point}\ba{rll}
\nabla C(\overline u)  +\allones \overline \lambda &=&\mathbf{0}\\
\allones^T (\overline u
 - P^l)&=& 0.
\ea\ee
In the remainder we assume that $C(u)$ is quadratic\footnote{\startmodif The results in this work holds for linear-quadratic cost functions as well, i.e. $C(u) = \frac{1}{2}u^T Q u + R^Tu + \allones ^T S$. In this case $\overline u = Q^{-1}(\overline \theta - R)$, where $\overline \theta = \frac{\allones \allones^T (P^l + Q^{-1}R)}{\allones^T Q^{-1}\allones} \in \mathcal{R}(\allones)$. For the sake of brevity we focus in this work on the quadratic case.
\stopmodif},
i.e. \startmodif $C(u) = \frac{1}{2}u^TQu = \sum_{i \in \mathcal{N}} \frac{1}{2}q_iu_i^2$, \stopmodif with $q_i>0$.
We make now explicit the solution to the previous set of equations in the case of quadratic cost functions.

\begin{lem}\label{l3}
Let $C(u)=\frac{1}{2} {u}^\trans Q u$, with $Q>0$ and diagonal. There exists a solution $(\overline u, \overline \lambda)$ to (\ref{eq.saddle.point}) if and only if
the optimal control is
\be\label{optimal.u}
\overline u = Q^{-1} \dst\frac{\allones\allones^\trans P^l}{\allones^\trans Q^{-1}\allones} ,
\ee
and the optimal Lagrange multiplier is
\[
\overline\lambda = \left(
-\dst\frac{\allones^\trans P^l}{\allones^\trans Q^{-1}\allones}
\right).
\]
\end{lem}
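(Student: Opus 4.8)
The plan is to treat the saddle-point conditions (\ref{eq.saddle.point}) as a linear system in the unknowns $(\overline u, \overline\lambda)$ and solve it explicitly, exploiting that for the quadratic cost $C(u)=\frac{1}{2}u^\trans Q u$ the gradient is simply $\nabla C(\overline u)=Q\overline u$. Since $Q$ is positive definite and diagonal it is invertible, so the stationarity condition $Q\overline u + \allones\overline\lambda = \mathbf{0}$ can be solved for $\overline u$ in terms of the scalar multiplier, namely $\overline u = -Q^{-1}\allones\,\overline\lambda$. This is the first step and it already pins down the structure of any candidate solution.

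The second step is to substitute this expression into the feasibility/constraint equation $\allones^\trans(\overline u - P^l)=0$. This collapses to a single scalar equation, $-\,\overline\lambda\,\allones^\trans Q^{-1}\allones - \allones^\trans P^l = 0$, which I would solve for $\overline\lambda = -\,\allones^\trans P^l/(\allones^\trans Q^{-1}\allones)$. The one point deserving a word of care is that $\allones^\trans Q^{-1}\allones>0$, which holds because $Q^{-1}$ is positive definite and $\allones\neq\mathbf{0}$; this guarantees the division is legitimate and the multiplier is uniquely determined. Back-substituting this value into $\overline u = -Q^{-1}\allones\,\overline\lambda$ yields $\overline u = Q^{-1}\,\allones\allones^\trans P^l/(\allones^\trans Q^{-1}\allones)$, matching (\ref{optimal.u}).

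For the ``if and only if'' I would argue both directions at once from the fact that the above manipulations are reversible equivalences: the pair $(\overline u,\overline\lambda)$ satisfies (\ref{eq.saddle.point}) precisely when it satisfies the two derived scalar/vector identities, whose unique solution is the stated pair. Existence is therefore immediate (the formulas define a genuine solution), and uniqueness follows because each elimination step determined the unknown without freedom. I would also recall that strict convexity of $C$ together with linearity of the constraint (already noted just before the lemma) guarantees that this stationary point is in fact the saddle point of the Lagrangian, so that the computed $(\overline u,\overline\lambda)$ is genuinely the optimizer and not merely a critical point.

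I do not expect a serious obstacle here: the argument is elementary linear algebra once $\nabla C(\overline u)=Q\overline u$ is inserted. The only substantive issue is ensuring the denominator $\allones^\trans Q^{-1}\allones$ is nonzero, which the positive definiteness of $Q$ settles immediately; everything else is routine back-substitution, and I would present it compactly rather than belaboring the algebra.
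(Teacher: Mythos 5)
Your proof is correct and is precisely the standard elimination argument the paper alludes to when it declares that ``The proof is standard and is omitted'': solve the stationarity condition $Q\overline u+\allones\overline\lambda=\mathbf{0}$ for $\overline u=-Q^{-1}\allones\overline\lambda$, substitute into $\allones^\trans(\overline u-P^l)=0$ to determine $\overline\lambda$ (a legitimate division since $\allones^\trans Q^{-1}\allones>0$ by positive definiteness of $Q$), and back-substitute. Your observations that each elimination step is a reversible equivalence, and that strict convexity of $C$ makes the stationary point the genuine saddle point, correctly supply both directions of the ``if and only if'' and match the discussion the paper gives just before the lemma.
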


The proof is standard and is omitted. For the optimal control characterized above to guarantee a zero frequency deviation, the equalities (\ref{eq}) should now be satisfied with $\overline u$ as in (\ref{optimal.u}) and $\overline \omega =\mathbf{0}$. In this case, the second equality becomes
\be\label{optimal.eta}
D \Gamma(\overline V) \boldsymbol{\sin} (\overline \eta) = (Q^{-1}\frac{ \allones \allones^T}{\allones^T Q^{-1} \allones} -I_n )P^l .
\ee
The equality (\ref{optimal.eta}) shows that an optimal solution may require a nonzero $D \Gamma(\overline V) \boldsymbol{\sin} (\overline \eta)$ at steady state. That implies that at steady state power flows may be exchanged among the buses in the network and that the local demand $P^l_i$ may not necessarily be all compensated by  $\overline u_i$. In fact, from (\ref{optimal.u}) it is seen that to balance the overall demand $\mathbf{1}^T P^l$ each generator should contribute an amount of power  that is inversely proportional to its marginal cost $q_i$.   From (\ref{optimal.u}), we also notice that  the optimal power generation is independent of the steady state voltage $\overline V$.
 Motivated by Lemma \ref{l3} and the remark that led to (\ref{optimal.eta}),  we introduce the following condition that replaces the previous Assumption \ref{assum2}:
\begin{assum}\label{assum4}
For a given  $P^l$, there exist $\overline \eta\in \mathcal{R}(D^T)$,    $\overline V\in \R^n_{>0}$   and $\overline E_{fd}\in \R^n$  for which
\be\label{Pl.optimal.cond}
(Q^{-1}\frac{ \allones \allones^T}{\allones^T Q^{-1} \allones} - I_n )P^l \in \mathcal{D},
\ee
 with $\mathcal{D}$ defined as in Lemma \ref{lemma1},
is satisfied and $\mathbf{0} = -E(\overline \eta)\overline V + \overline E_{fd}$.
\end{assum}
We can relate optimization problem (\ref{optpr2}) to  another optimization problem in which the zero frequency deviation requirement at steady state is more explicit.\footnote{The authors thank Nima Monshizadeh for suggesting this lemma.}
\begin{lem}\label{lemeq}
  Let Assumption \ref{assum4} hold and let $C(u)=\frac{1}{2} {u}^\trans Q u$, with $Q>0$ and diagonal. Then the optimal $\overline u$ solving (\ref{optpr2}) is equivalent to the optimal $\overline u'$ solving
  \be\ba{rl}\label{optpr1}
&\dst\min_{u, \eta} C(u)=\dst \min_{u, \eta}\sum_{i \in \mathcal{V}} C_i(u_i)\\
{\rm s.t.} & \boldsymbol{0}= u -
D\Gamma(\overline V) \boldsymbol{\sin}(\eta) - P^l \\
& \eta\in \mathcal{R}(D^T).
\ea\ee
\end{lem}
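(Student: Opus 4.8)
The plan is to prove that the two problems share the same minimizer by comparing their feasible sets in the variable $u$. The key structural observation is that the networked balance constraint in (\ref{optpr1}) is strictly stronger than the scalar balance constraint in (\ref{optpr2}), yet Assumption \ref{assum4} guarantees that the unique optimizer of (\ref{optpr2}) already satisfies the stronger constraint.

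First I would show that every $u$ feasible for (\ref{optpr1}) is feasible for (\ref{optpr2}). Multiplying the equality constraint of (\ref{optpr1}) on the left by $\allones^\trans$ and using $\allones^\trans D = \mathbf{0}$ (each column of the incidence matrix sums to zero), the flow term $D\Gamma(\overline V)\boldsymbol{\sin}(\eta)$ is annihilated and one is left precisely with $\allones^\trans(u - P^l) = 0$. Hence the feasible $u$-set of (\ref{optpr1}) is contained in that of (\ref{optpr2}), so the optimal value of (\ref{optpr1}) is no smaller than the optimal value of (\ref{optpr2}).

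Next I would verify that the optimizer $\overline u$ of (\ref{optpr2}), given in closed form in Lemma \ref{l3}, is in fact feasible for (\ref{optpr1}). Substituting $\overline u = Q^{-1}\frac{\allones\allones^\trans P^l}{\allones^\trans Q^{-1}\allones}$ gives $\overline u - P^l = (Q^{-1}\frac{\allones\allones^\trans}{\allones^\trans Q^{-1}\allones} - I_n)P^l$, which is exactly the vector that Assumption \ref{assum4} places in the set $\mathcal{D}$ of (\ref{set.D}). By the definition of $\mathcal{D}$ there is then some $\eta \in \mathcal{R}(D^\trans)$ with $\overline u - P^l = D\Gamma(\overline V)\boldsymbol{\sin}(\eta)$, so the pair $(\overline u, \eta)$ satisfies the constraints of (\ref{optpr1}).

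Finally I would combine the two facts. Since $\overline u$ is feasible for (\ref{optpr1}) and attains the optimal value of (\ref{optpr2}), which is a lower bound for the optimal value of (\ref{optpr1}), the two optimal values coincide and $\overline u$ is a minimizer of (\ref{optpr1}). Because $C$ is strictly convex (as $Q > 0$), and both $\overline u$ and any optimizer $\overline u'$ of (\ref{optpr1}) lie in the affine feasible set of (\ref{optpr2}) while attaining the same value there, uniqueness of the minimizer over that affine set yields $\overline u' = \overline u$. I expect no genuine obstacle here: the only nonroutine ingredient is the membership of $\overline u - P^l$ in $\mathcal{D}$, which is precisely what Assumption \ref{assum4} supplies, allowing the scalar-feasible optimizer to be lifted to a network-feasible one.
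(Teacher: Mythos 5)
Your proposal is correct and follows essentially the same route as the paper's proof: left-multiplying the constraint of (\ref{optpr1}) by $\allones^\trans$ to show one feasibility inclusion, invoking Assumption \ref{assum4} to lift the optimizer $\overline u$ of (\ref{optpr2}) to a network-feasible pair $(\overline u, \overline \eta)$, and concluding $\overline u' = \overline u$ by strict convexity of $C$. The only cosmetic difference is that you substitute the closed form of $\overline u$ from Lemma \ref{l3} directly, whereas the paper first observes $\overline u - P^l \in \mathcal{R}(\allones)^\bot = \mathcal{N}(D^\trans)^\bot = \mathcal{R}(D)$ before applying Assumption \ref{assum4}; both steps serve the identical purpose.
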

\begin{proof}
  By multiplying both sides of the equality constraint of (\ref{optpr1}) from the left by $\allones^T$, we obtain the constraint of (\ref{optpr2}). Hence, $\overline u'$ satisfies (\ref{optpr2}), and we have $C(\overline u) \leq C(\overline u')$. By the equality constraint in (\ref{optpr2}), we have $\overline u-P^l \in \mathcal{R}(\allones)^\bot$. Thus, $\overline u-P^l \in \mathcal{N}(D^T)^\bot$ which yields $\overline u-P^l \in \mathcal{R}(D)$. Therefore, $\overline u-P^l = Dv$ for some vector $v$. By the choice $v=\Gamma(\overline V) \boldsymbol{\sin}(\overline \eta)$, which exists under Assumption $\ref{assum4}$, $\overline u-P^l = Dv$ satisfies (\ref{optpr1}) and we have $C(\overline u') \leq C(\overline u)$. Consequently, $C(\overline u') = C(\overline u)$ which results in $\overline u' = \overline u$ due to the strict convexity of $C$.
\end{proof}
\vspace{1em}
Lemma \ref{lemeq} provides insights on how the nonconvex optimization problem (\ref{optpr1}) can be solved for $\overline u'$ by (\ref{optpr2}) without  the approximation $\boldsymbol{\sin}(\eta) = \eta$ as long as Assumption \ref{assum4} holds.  This can be seen as an alternative approach to  solving for (\ref{optpr1}) by an equivalant `DC' problem (see e.g. \cite{dorfler2014breaking}) where the constraint reads as $\boldsymbol{0}= u -
D\Gamma(\overline V) \eta^{DC}$ and requires the graph to be a tree \cite{dorflernovel}.  The characterization of $\overline u$ in (\ref{optimal.u}) will enable the design of controllers regulating the frequency in an optimal manner, which we pursue in the next section. We also remark that even in the case in which $P^l$ is a time-varying signal, the optimal power generation control that guarantees a zero frequency deviation is still given by $\overline u$ in (\ref{optimal.u}). This property will be used in Section \ref{sec.6}. Finally, we notice that, following \cite{burger.depersis.aut13}, the optimal generation $\overline u$ characterized above can be interpreted as the optimal feedfoward control which solves the regulator equations connected with the frequency regulation problem. We will elaborate on this more in the next section.

\smod
\begin{rem}
It is worth stressing that the explicit request of having $\overline V\in \R^n_{> 0}$ in Assumption  \ref{assum2}, \ref{assum3} and \ref{assum4} is not necessary\footnote{The authors thank Hong-keun Kim for this remark.}. As a matter of fact, for any $\overline V$ which satisfies $-E(\overline\eta) \overline V +\overline E_{fd}=\mathbf{0}$, it trivially holds true that $\overline V =E(\overline\eta)^{-1}\overline E_{fd}$. Let $\overline E_{fd}\in \R^n_{> 0}$. Since  $E(\overline\eta)$ has all the off-diagonal entries non-positive, then it is inverse-positive  (\cite{Plemmons1977175}, Theorem 1, ${\rm F_{15}}$), i.e.~each entry of the inverse $E(\overline\eta)^{-1}$ is non-negative.
%
%
Furthermore, since $E(\overline\eta)^{-1}$ is invertible, each row has at least one strictly positive entry. Therefore, the product $\overline V =E(\overline\eta)^{-1}\overline E_{fd}$ must necessarily return a vector with all strictly positive entries.
\end{rem}
\emod
\section{Economically efficient frequency regulation in the presence of constant power demand }\label{sec.contr}
Corollary \ref{th1} shows attractivity of the steady state solution under a constant imbalance vector $u-P^l$, which generally results in a nonzero steady state frequency deviation. In this section we consider the problem of designing the generation $u$ in such a way that at steady state the system achieves a zero frequency deviation. We adopt the framework provided in \cite{burger.et.al.mtns14b}, \cite{burger.depersis.aut13}, \cite{Burger2013b}. This framework provides a constructive and straightforward procedure to the design of the frequency regulator.

We start the analysis by reminding that Theorem \ref{c1} states the  incremental passivity property of the system
\be\label{multi-machine-overall2}
\ba{rcl}
\dot \eta &=& D^\trans \omega \\
M\dot \omega &=& u - A \omega  -D 
\Gamma \boldsymbol{\sin} (\eta) - P^l \\
T\dot{V} &=& -E(\eta)V + \overline E_{fd} \\
y &=& \omega.
\ea\ee
The incremental passivity property  holds with respect to two solutions  of (\ref{multi-machine-overall2}). As one of the two solutions, we adopt here a
solution to the {\em regulator equations} (\ref{re-multi-machine-overall2}) below (\cite{burger.depersis.aut13}, \cite{Burger2013b}).  This is  the  state 
$(\overline \eta,\overline \omega, \overline V)$, the feedforward input $\overline u$ and the output $\overline y=\overline \omega=\mathbf{0}$
such that
\be\label{re-multi-machine-overall2}
\ba{rll}
\dot {\overline \eta} &=& D^\trans \overline \omega \hspace{0.5em} =  \hspace{0.5em} \mathbf{0} \\
\mathbf{0} &=& \overline  u -D \Gamma(\overline V) \boldsymbol{\sin} (\overline \eta) - P^l \\
\mathbf{0} &=& -E(\overline \eta)\overline V + \overline E_{fd} \\
\overline y &=& \overline \omega  \hspace{0.5em} =  \hspace{0.5em} \mathbf{0}.
\ea\ee
Among the many possible choices, we focus on  the steady state  solution that arises from the solution of the optimal control problem in the previous section, namely
\be\label{optimal.uX}
\overline u =
Q^{-1} \dst\frac{\allones\allones^\trans P^l}{\allones^\trans Q^{-1}\allones} ,
\ee
characterized in (\ref{optimal.u}) above, and $\overline \eta$ such that
\be\label{optimal.eta2}
D \Gamma(\overline V) \boldsymbol{\sin} (\overline \eta) = (Q^{-1}\frac{ \allones \allones^T}{\allones^T Q^{-1} \allones} - I_n)P^l . \nonumber
\ee
The framework presented in \cite{burger.depersis.aut13} and \cite{Burger2013b} prescribes to design an incrementally passive feedback controller that is able to generate the feedforward input (\ref{optimal.uX}). The interconnection of the process (\ref{multi-machine-overall2}) and of the incrementally feedback controller to be introduced below yields
 a closed-loop
system whose solutions asymptotically converge to the desired steady state solution. This idea is made precise in the \startmodif theorem \stopmodif below, that is
the main result of the section and where we propose a dynamic controller that converges asymptotically to the optimal feedforward input that guarantees zero frequency deviation. The result deals with constant power demand, the extension to time-varying power demands being postponed to a later section. \startmodif As will become clear it is essential that the controllers exchange information, leading to the following assumption.
\begin{assum}\label{assum5}
  The undirected graph reflecting the topology of information exchange among the nodes is connected.
\end{assum}
\begin{thm}\label{th5}
Consider the system
(\ref{multi-machine-overall2}) with constant power demand $P^l$ and let Assumptions  \smod \ref{assum3}, \ref{assum4} and \ref{assum5} \emod hold.
 Then controllers at the nodes\footnote{\smod For linear-quadratic cost functions the controller output becomes $u_i = q_i^{-1}(\theta_i - r_i)$. \emod}
\be\label{distr.contr}\ba{rcl}
\dot \theta_i &=& \sum_{j\in \mathcal{N}_i^{comm}} (\theta_j-\theta_i) - q_i^{-1}\omega_i \\
u_i &=& q_i^{-1} \theta_i,
\ea\ee
for $i=1,2,\ldots, n,$ where $\mathcal{N}^{comm}_i$
denotes the set of neighbors of node $i$ in a graph describing the exchange of information among the controllers,
guarantee the solutions to the closed-loop system that start in a neighborhood of  $( \overline \eta, \overline \omega, \overline V, \overline \theta)$  to converge asymptotically to the largest invariant set where $\omega_i=0$ for all $i=1,2,\ldots, n$, $\|\nabla_V W_2\|=0$ (thus, $V=\tilde V$ is a constant), and $\theta=\overline\theta$,
$\overline\theta$ being the vector
\[
{\overline\theta} = \dst\frac{\allones\allones^\trans P^l}{\allones^\trans Q^{-1}\allones},
\]
such that
$
\overline u = Q^{-1} \overline\theta
$
satisfies
\[\ba{rcl}
\dot{\tilde\eta} &=& \mathbf{0}\\
\mathbf{0} &=& \overline u - D\Gamma  (\tilde V)  \boldsymbol{\sin}(\tilde \eta) - P^l\\
\mathbf{0} &=& -E(\tilde \eta)\tilde V + \overline E_{fd}\\
y &=& \mathbf{0}.
\ea\]
\end{thm}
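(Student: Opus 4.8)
The plan is to follow the internal-model / incremental-passivity recipe of \cite{burger.depersis.aut13}, \cite{Burger2013b}: treat the controller (\ref{distr.contr}) as an incrementally passive system, interconnect it with the incrementally passive plant of Theorem \ref{c1}, and certify convergence through the sum of the two storage functions together with LaSalle's invariance principle. First I would write the controller compactly as $\dot\theta = -L^{comm}\theta - Q^{-1}\omega$, $u = Q^{-1}\theta$, where $L^{comm}$ is the symmetric positive semidefinite Laplacian of the communication graph and I used $y=\omega$. I then pick the candidate controller equilibrium $\overline\theta = \dst\frac{\allones\allones^\trans P^l}{\allones^\trans Q^{-1}\allones}$ and check that $(\overline\eta, \overline\omega, \overline V, \overline\theta)$ is an equilibrium of the closed loop: the plant equations hold with $\overline\omega = \mathbf{0}$ and $\overline u = Q^{-1}\overline\theta$ by Assumption \ref{assum4} (these are precisely the regulator equations (\ref{re-multi-machine-overall2})), while $\dot{\overline\theta} = -L^{comm}\overline\theta = \mathbf{0}$ because $\overline\theta\in\mathcal{R}(\allones) = \mathcal{N}(L^{comm})$, the last equality being guaranteed by the connectivity Assumption \ref{assum5}.

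Second, I would establish incremental passivity of the controller with respect to $\overline\theta$ using the storage function $W_3(\theta,\overline\theta) = \frac{1}{2}(\theta-\overline\theta)^\trans(\theta-\overline\theta)$. Using $L^{comm}\overline\theta = \mathbf{0}$ and $\overline\omega = \mathbf{0}$, a direct computation gives
\[
\dot W_3 = -(\theta-\overline\theta)^\trans L^{comm}(\theta-\overline\theta) - (u-\overline u)^\trans(y-\overline y),
\]
where I exploited $u-\overline u = Q^{-1}(\theta-\overline\theta)$ and the symmetry of $Q^{-1}$. Adding this to the inequality of Theorem \ref{c1} for the combined storage function $\mathcal{U} = U + W_3$, the cross terms $(y-\overline y)^\trans(u-\overline u)$ cancel and I obtain $\dot{\mathcal{U}} = -\rho(y-\overline y) - \|\nabla_V W_2\|^2_{T^{-1}} - (\theta-\overline\theta)^\trans L^{comm}(\theta-\overline\theta) \le 0$, since $\rho$ is positive definite and $L^{comm}\ge 0$.

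Third, I would invoke LaSalle. The function $\mathcal{U}$ is the sum of $W_1$, $W_2$ and $W_3$, each attaining a strict local minimum at its equilibrium component (for $W_2$ this is Lemma \ref{lemHessian} under Assumption \ref{assum3}), so $\mathcal{U}$ has a strict local minimum at $(\overline\eta, \overline\omega, \overline V, \overline\theta)$ and there is a compact forward-invariant level set $\Upsilon$ around it. Trajectories from $\Upsilon$ converge to the largest invariant set in $\Upsilon\cap\{\dot{\mathcal{U}}=0\}$, where necessarily $\omega = \overline\omega = \mathbf{0}$, $\nabla_V W_2 = \mathbf{0}$ (hence $T\dot V = \mathbf{0}$ and $V=\tilde V$ is constant), and $L^{comm}(\theta-\overline\theta)=\mathbf{0}$, i.e. $\theta\in\mathcal{R}(\allones)$. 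To fix the value of $\theta$, I note that $\omega\equiv\mathbf{0}$ forces $\dot\omega = \mathbf{0}$, so $Q^{-1}\theta - D\Gamma(\tilde V)\boldsymbol{\sin}(\tilde\eta) - P^l = \mathbf{0}$; multiplying on the left by $\allones^\trans$ and using $\allones^\trans D = 0$ gives $\allones^\trans Q^{-1}\theta = \allones^\trans P^l$, and substituting $\theta = c\allones$ yields $c = \allones^\trans P^l/(\allones^\trans Q^{-1}\allones)$, that is $\theta = \overline\theta$ and $u = Q^{-1}\overline\theta = \overline u$, the optimal generation. The residual identities are exactly those displayed in the statement.

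I expect the delicate points to be twofold. First, the bookkeeping that makes $(\overline\eta,\overline\omega,\overline V,\overline\theta)$ a genuine equilibrium at which $\mathcal{U}$ is minimized, which is where Assumptions \ref{assum3}, \ref{assum4} and \ref{assum5} enter simultaneously. Second, and more importantly, the final characterization of $\theta$ on the invariant set: the projection onto $\allones$ annihilates the network-flow term via $\allones^\trans D = 0$ and combines with the consensus constraint $\theta\in\mathcal{R}(\allones)$ to single out $\overline\theta$. As in the Remark following Corollary \ref{th1}, I would not claim $\tilde\eta=\overline\eta$ or $\tilde V=\overline V$, since in the presence of cycles the invariant set may contain other $(\tilde\eta,\tilde V)$ sharing the same projected balance; this is precisely why the conclusion is stated for the largest invariant set rather than as pointwise convergence to the a priori equilibrium.
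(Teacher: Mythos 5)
Your proposal is correct and follows essentially the same route as the paper's proof: the same storage function $U+\frac{1}{2}(\theta-\overline\theta)^\trans(\theta-\overline\theta)$, the same cancellation of the passivity cross terms, LaSalle on a compact level set around the strict local minimum, and the same final step of left-multiplying the steady power balance by $\allones^\trans$ (using $\allones^\trans D=0$) together with the consensus constraint $\mathcal{N}(L_{comm})=\mathcal{R}(\allones)$ to conclude $\theta=\overline\theta$. The only cosmetic difference is that the paper first writes the controller with a free stabilizing input $v$ to exhibit the internal-model property and then sets $v=-\omega$, whereas you work directly with the closed-loop form; your caveat about not claiming $\tilde\eta=\overline\eta$, $\tilde V=\overline V$ matches the paper's remark.
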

\begin{proof}
Bearing in mind Theorem \ref{c1}, one can notice  that the incremental storage function
$
U(\omega, \overline\omega, \eta, \overline\eta, V, \overline V)=W_1(\omega, \overline\omega)+W_2(\eta, \overline\eta, V, \overline V)
%
$
satisfies
$
\dot U = -(\omega-\overline\omega)^\trans A (\omega- \overline\omega) -\|\nabla_{V} W_2\|^2_{T^{-1}} +
(\omega-\overline\omega)^\trans (u-\overline u),
$
thus showing that the system is output strictly incrementally passive. This equality holds in particular for $\overline\omega=\mathbf{0}$, $\overline u$ given in (\ref{optimal.u}) and $\overline \eta, \overline V$ as in Assumption \ref{assum4}.
The internal model principle design pursued in \cite{burger.depersis.aut13} and \cite{Burger2013b} prescribes the design of a controller able to generate the feedforward input   $\overline u$. To this purpose,
we introduce the overall controller
%
%
%
%
\be\label{stacked.controller}\ba{rcl}
\dot \theta &=& -L_{comm} \theta +\overline H^T v\\
u &=& \overline{H} \theta,
\ea\ee
where $\theta\in \R^n$,
$L_{comm}$ the Laplacian associated with a graph that describes the exchange of information among the controllers, and with the term
$
\overline H^T v
$
needed to guarantee the incremental passivity property of the controller (see \cite{burger.depersis.aut13}, \cite{Burger2013b} for details). Here $v\in \R^n$ is an extra control input to be designed later, while $\overline H=\overline H^T = Q^{-1}$.
If $v=\mathbf{0}$ and $\overline \theta(0)= \dst\frac{\allones\allones^\trans P^l}{\allones^\trans Q^{-1}\allones} $,
then $\overline \theta(t):=\overline \theta(0)$ satisfies the differential equation in (\ref{stacked.controller}) and moreover the corresponding output $ \overline{H}\, \overline \theta(t)$ is identically equal to the feedforward input $\overline u(t)$ defined in (\ref{optimal.u}), provided that $\overline H= Q^{-1}$. More explicitly, we have
\be\label{stacked.controller.w}\ba{rcl}
\dot{\overline\theta} &=& -L_{comm}\overline\theta\\
\overline u &=& \overline H \,\overline \theta.
\ea\ee
 Notice that this is a manifestation of the internal model principle (\cite{burger.depersis.aut13}, \cite{Burger2013b}), that is the ability of the controller to generate, in open-loop and when properly initialized, the prescribed feedforward input.
Consider now the incremental storage function
$
\Theta(\theta,\overline \theta)=\frac{1}{2} (\theta-\overline \theta)^T (\theta-\overline \theta).
$
It satisfies
\[\ba{rcl}
\dot \Theta(\theta,\overline \theta)&=&(\theta-\overline \theta)^T (-L_{comm} \theta +\overline H^T v +L_{comm}\overline \theta)\\
&= & -(\theta-\overline \theta)^T L_{comm}(\theta-\overline \theta)+ (\theta-\overline \theta)^T\overline H^T v \\
&=&-(\theta-\overline \theta)^T L_{comm}(\theta-\overline \theta) +(u-\overline u)^T v.
\ea\]
We now interconnect the 
third-order model (\ref{multi-machine-overall2}) and the controller
(\ref{stacked.controller}), obtaining
\[
\ba{rcl}
\dot \eta &=& D^\trans \omega \\
 M\dot \omega &=& \overline{H} \theta  -D \Gamma(V) \boldsymbol{\sin} (\eta) - A \omega  - P^l \\
 T\dot{V} &=& -E(\eta)V + \overline E_{fd} \\
\dot \theta &=& -L_{comm} \theta +\overline H^T v\\
y &=& \omega.
\ea
\]
Observe that the quadruple $(\overline\eta, \overline\omega, \overline V, \overline\theta)$ is a  solution to the closed-loop system just defined when $v=\mathbf{0}$.
Consider  the incremental storage function
\[
Z(\eta, \overline\eta, \omega, \overline\omega, V, \overline V,\theta,\overline\theta)=U(\eta, \overline\eta,\omega, \overline\omega, V, \overline V)+\Theta(\theta,\overline\theta),
\]
where $( \overline \eta, \overline V)$ fulfills Assumption \ref{assum3}.
Following the arguments of Lemma \ref{lemHessian}, it is immediate to see that under condition  (\ref{eq.assum3}) 
we have that $\nabla Z|_{\eta= \overline \eta, \omega=\overline \omega, V=\overline V, \theta = \overline \theta} = 0$ and
 $\nabla^2 Z|_{\eta= \overline \eta, \omega=\overline \omega, V=\overline V, \theta = \overline \theta} > 0$,
such that $Z$ has a strict local minimum at  $( \overline \eta, \overline \omega, \overline V, \overline \theta)$. It turns out that
\[\ba{rll}
\dot Z &=& -(\omega-\overline\omega)^\trans A (\omega- \overline\omega)  -\|\nabla_{V} W_2\|^2_{T^{-1}} \\&& +
(\omega-\overline\omega)^\trans (u-\overline u)
-(\theta-\overline\theta)^T L_{comm} (\theta-\overline\theta)\\&&+(u-\overline u)^T v.
\ea\]
As we are still free to design $v$, the choice
$
v= -(\omega-\overline\omega)= -\omega
$
returns
\be\ba{rll}
\dot Z&=& -(\omega-\overline\omega)^\trans A (\omega- \overline\omega)  -\|\nabla_{V} W_2\|^2_{T^{-1}} \\&& -(\theta-\overline\theta)^T L_{comm} (\theta-\overline\theta) \leq 0, \nonumber
\ea\ee
thus showing that $Z$ is bounded.
 As $\dot{Z} \leq 0$,
 there exists a compact level set $\Upsilon$
around the equilibrium $( \overline \eta, \overline \omega, \overline V, \overline \theta)$
which is forward invariant.
By LaSalle's invariance principle the solution starting in $\Upsilon$ asymptotically converges to the largest invariant set contained in $\Upsilon \cap \{
( \overline \eta, \overline \omega, \overline V, \overline \theta):  \overline \omega=\mathbf{0}, \|\nabla_{V} W_2\|=0, \theta = \overline \theta + \mathbf{1}_n \alpha\}$, where $\alpha:\R_{\ge 0}\to \R$ is a function \smod and $\theta = \overline \theta + \mathbf{1}_n \alpha$ follows from the communication graph being connected, i.e. $ \mathcal{N}(L_{comm}) = \mathcal{R}(\mathbf{1}_n)$. \emod On such invariant set the system is
  \be\ba{rll}\label{invariant_freq}
\dot{\eta} &=& D^T \overline \omega \hspace{0.5em}=\hspace{0.5em}\mathbf{0} \\
\boldsymbol{0} &=&  -D (\Gamma(\tilde V) \boldsymbol{\sin}(\eta) - \Gamma(\overline V) \boldsymbol{\sin}(\overline \eta)) -\overline H\mathbf{1}_n\alpha
\\ \boldsymbol{0} &=& -E(\eta)\tilde V + \overline E_{fd}  \\
\dot {\overline \theta} +\mathbf{1}_n\dot \alpha&=& -L_{comm} (\overline \theta +\mathbf{1}_n \alpha),
\ea\ee
where $\tilde V$ is a constant. From $\dot{\eta} = 0$ it follows that on the invariant set  $\eta$ is a constant $\eta = \tilde \eta$. In the second equality above, we have exploited the identity $\mathbf{0}= \overline H\overline \theta - D\Gamma(\overline V) \boldsymbol{\sin}(\overline \eta) -A \overline \omega -P^l$.
Bearing in mind that $\overline H=Q^{-1}$, it follows that necessarily $\alpha=0$ (it is sufficient to multiply both sides of the second line in (\ref{invariant_freq}) by $\allones^T$). Hence on the invariant set $\theta=\overline\theta$ and the output of the controller is $\overline H \overline\theta$ which equals the optimal feedforward input (\ref{optimal.uX}). We conclude that the dynamical controller guarantees asymptotic regulation to zero of the frequency deviation and convergence to the optimal feedforward input.
\end{proof}
%
%
%
%
%
%
\vspace{1em}
The interpretation of the theorem is straightforward: it shows that the dynamic  controllers based on an internal model design synchronize to a steady state solution of the exosystem that generates the feedforward input  that minimizes generation costs and is able to guarantee a zero frequency deviation. These controllers must be initialized in the vicinity of $\overline \theta$ which represents a nominal  estimate of the total demand. Starting from this initial guess, the controllers adjust the power production depending on the frequency deviation which in turn depends on {\em actual} (and {\em unmeasured}) demand.
\begin{rem}
The use of an auxiliary communication graph to allow the exchange of information among the controllers at the node\smod s \emod is commonly found in wide area control of the power grid and has been suggested in \cite{simpson-porco_synchronization_2013} for control of microgrids as well.
  The Laplacian matrix $L_{comm}$ reflecting the exchange of information among the nodes is introduced here in order to prove convergence of $u$ to the optimal $\overline u$. Note that the communication graph can differ from the graph describing the transmission network \cite{simpson-porco_synchronization_2013} \smod as long as the communication graph is connected\emod. The distributed nature of the controllers can be relaxed resulting in fully decentralized controllers without exchange of information at the price that optimality can not be guaranteed anymore. \smod The robustness of such decentralized controllers to e.g. noise and delays has attracted a considerable amount of attention (\cite{Overview}, \cite{KTH}, and references therein) and its practical applicability needs to be evaluated carefully.  \emod
\end{rem}
\section{Frequency regulation in the presence of time-varying power demand}\label{sec.6}
Until now we assumed that the power demand term $P^l$ is unknown but constant, as is a standard practice in current research. Future smart grids should however be able to cope with rapid fluctuations of the power demand at the same timescale as the dynamics describing the physical infrastructure, such that approximating the power demand by a constant can become unrealistic.  This asks for controllers able to deal with time-varying power demand.  In the previous section we studied within the framework of \cite{burger.depersis.aut13} dynamical controllers able to achieve zero frequency deviation with steady state optimal production in the presence of constant power demand. Since the framework of \cite{burger.depersis.aut13} lends itself to deal with time-varying disturbances, it is natural to wonder whether the approach can be used to design frequency regulators in the presence of time-varying power demand. This is investigated in this section.\\
  Although the power demand is not known, we will assume that it is the output of a known exosystem, as it is customary in output regulation theory. Let $P^l$ depend linearly on $\sigma$, namely, let
\be\ba{rll}\label{exos2}
P^l=\Pi \sigma,
\ea\ee
for some matrix $\Pi$, where $\sigma$ is
the state variable of the exosystem
\be\label{exos1}
\dot \sigma=s(\sigma).
\ee
Here the map $s$ is assumed to satisfy the incremental passivity property $(s(\sigma)-s(\sigma'))^T (\sigma-\sigma')\le 0$ for all $\sigma,\sigma'$. It will be useful to limit ourselves to  the case $s(\sigma)=S\sigma$, with $S$ a skew-symmetric matrix. In this case, the exosystem (\ref{exos2}), (\ref{exos1}) generates linear combinations of constant and sinusoidal signals. We will however continue to refer to $s(\sigma)$ for the sake of generality, using explicitly $S\sigma$ only when needed. \smodc The choice (\ref{exos1}) is further motivated by spectral decomposition of load patterns \cite{Aguirre200873}, ocean wave energy \cite{Falnes2007185} and wind energy \cite{van1957power}, \cite{milan2013turbulent} that indicate that the net load can indeed be approximated by a superposition of a constant and a few sinusoidal signals. \emodc More explicit, we model the power demand $P^l_i$ as a superposition of a constant power demand
($\Pi_{1i}\sigma_{1}$),
a periodic power demand that can be compensated optimally ($\Pi_{2i}\sigma_{2}(t)$) and a periodic power demand that cannot be compensated optimally ($\Pi_{3i}\sigma_{3i}(t)$), such that
$P^l_i(t) = 
\Pi_{1i}\sigma_{1}  + \Pi_{2i}\sigma_{2}(t) + \Pi_{3i}\sigma_{3i}(t)$. The reason why we distinguish between $\Pi_{2i}\sigma_{2}$ and $\Pi_{3i}\sigma_{3i}$ becomes evident in the next subsection.
Similarly we write the steady state input as a sum of its components, $\overline u_i(t) = \overline u_{1i} + \overline u_{2i}(t) + \overline u_{3i}(t)$. The explicit dependency on time will be dropped in the remainder and was added here to stress the differences between constant and time-varying signals.
\begin{example}\label{example}
Consider the case of a periodic power demand with frequency $\mu$ superimposed to a constant power demand. This demand can be modeled as
$
P^l_i= \Pi_{1i} \sigma_1 + \Pi_{2i} \sigma_2
$
where $\dot \sigma =S \sigma$ with
\[
S=\left(\ba{rr}
0 & \mathbf{0}_{1\times 2} \\
\mathbf{0}_{2\times 1} & S_2
\ea\right)=\left(\ba{rrr}
0 & 0 & 0 \\
0 & 0 & \mu\\
0 & -\mu & 0 \\
\ea\right),
\]
$\Pi_{1i}$ is a real number and $\Pi_{2i}=q_i^{-1} (1\;0) = q_i^{-1}R_2 $. In this case $R_2=(1\;\;0)$ and notice that the pair $(R_2, S_2)$ is observable.
\end{example}
\subsection{Economically efficient frequency regulation in the presence of a class of time-varying power demand}
We focus in this subsection on the case in which the power demand at each node has the form $P^l_i = \Pi_{1i}\sigma_{1} + \Pi_{2i}\sigma_{2}$, where $\sigma_1, \sigma_2$ will be specified below. At steady state we have that $\dot{\eta} =\mathbf{0}$, $\dot{V} = \mathbf{0}$ and therefore power flows between different control areas need to be constant. This observation restricts the class of time-varying power demand that can be compensated for by an optimal generation $u$. We will make this more specific.
Recall that the optimal power generation at steady state is given by
\be\label{optimal.u2}
\overline u =
Q^{-1} \dst\frac{\allones\allones^\trans P^l}{\allones^\trans Q^{-1}\allones} ,
\ee
characterized in (\ref{optimal.u}) above. In this case, the second equality in  (\ref{re-multi-machine-overall2}) writes as  in (\ref{optimal.eta})
\be\label{optimal.eta2}
D \Gamma(\overline V) \boldsymbol{\sin} (\overline \eta) = (Q^{-1}\frac{ \allones \allones^T}{\allones^T Q^{-1} \allones} -I_n)P^l .
\ee
This implies that the quantity on the right-hand side must be  constant and that there must exist  a vector $\overline \eta\in \mathcal{R}(D^T)$ which satisfies the equality.
If we differentiate in the disturbance term $\Pi \sigma$ between a constant component $\Pi_1 \sigma_1$ and a time-varying component $\Pi_2 \sigma_2$, i.e. $\Pi \sigma=\Pi_1 \sigma_1+\Pi_2 \sigma_2$, and there exists a solution to the identity (\ref{optimal.eta2}) when $\Pi \sigma$ is replaced by $\Pi_1 \sigma_1$, then such a solution continues to exist provided that  the time-varying component of $\Pi \sigma$ belongs to the null space of $(Q^{-1}\frac{ \allones \allones^T}{\allones^T Q^{-1} \allones} -I_n)$. The null space above can be easily characterized.
\begin{lem}\label{lemnullspace}
  The null space of $(Q^{-1}\frac{ \allones \allones^T}{\allones^T Q^{-1} \allones} -I_n)$ is given by $\mathcal{R}(Q^{-1}\allones)$.
\end{lem}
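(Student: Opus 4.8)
The plan is to recognize the matrix $P := Q^{-1}\frac{\allones\allones^\trans}{\allones^\trans Q^{-1}\allones}$ as a (generally oblique) projection and to exploit the elementary fact that for any idempotent matrix the null space of $P-I_n$ coincides with the range of $P$. The first step I would take is to confirm that $P$ is well defined: since $Q>0$ is diagonal, $Q^{-1}$ is positive definite, so the scalar $\allones^\trans Q^{-1}\allones$ is strictly positive and the denominator never vanishes. This positivity is the only place where the hypotheses on $Q$ are actually used.

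Next I would verify idempotency by the one-line computation $P^2 = Q^{-1}\frac{\allones(\allones^\trans Q^{-1}\allones)\allones^\trans}{(\allones^\trans Q^{-1}\allones)^2} = P$, so $P$ is a projection. For any projection the set of fixed points equals the range, i.e. $Px=x$ if and only if $x\in\mathcal{R}(P)$; equivalently $\mathcal{N}(P-I_n)=\mathcal{R}(P)$, which is exactly the null space the lemma asks for. It then remains to identify $\mathcal{R}(P)$. Since for every $x$ one has $Px = \frac{\allones^\trans x}{\allones^\trans Q^{-1}\allones}\,Q^{-1}\allones$, the image of $P$ is the one-dimensional span of $Q^{-1}\allones$, that is $\mathcal{R}(P)=\mathcal{R}(Q^{-1}\allones)$. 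Combining these observations yields $\mathcal{N}\big(Q^{-1}\frac{\allones\allones^\trans}{\allones^\trans Q^{-1}\allones}-I_n\big)=\mathcal{R}(Q^{-1}\allones)$, as claimed.

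An equivalent, entirely self-contained route avoids the projection language and argues by double inclusion. If $(P-I_n)x=\mathbf{0}$ then $x = c\,Q^{-1}\allones$ with the scalar $c=\allones^\trans x/(\allones^\trans Q^{-1}\allones)$, giving $\mathcal{N}(P-I_n)\subseteq\mathcal{R}(Q^{-1}\allones)$; conversely, substituting $x=\alpha\,Q^{-1}\allones$ and cancelling the factor $\allones^\trans Q^{-1}\allones$ shows $(P-I_n)x=\mathbf{0}$, giving the reverse inclusion. I do not expect any genuine obstacle here: the argument is a short scalar factorization, and the only point that requires care is the well-posedness of the normalizing denominator, which is guaranteed by the positive definiteness of $Q^{-1}$.
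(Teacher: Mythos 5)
Your proof is correct, and it takes a genuinely different route from the paper's. The paper does not use the projection structure at all: it multiplies the matrix by the nonzero scalar $-\allones^\trans Q^{-1}\allones$ and observes that the result is the transpose of the Laplacian matrix of a weighted complete graph (with positive weights $q_i^{-1}$, hence connected), invokes the standard fact that such a Laplacian has rank $n-1$, concludes the null space is one-dimensional, and then checks by direct substitution that $\mathcal{R}(Q^{-1}\allones)$ is contained in it, so equality follows from a dimension count. Your argument instead verifies that $P = Q^{-1}\allones\allones^\trans/(\allones^\trans Q^{-1}\allones)$ is idempotent and uses $\mathcal{N}(P-I_n)=\mathcal{R}(P)$ together with the explicit formula $Px = \bigl(\allones^\trans x/(\allones^\trans Q^{-1}\allones)\bigr)Q^{-1}\allones$; this yields both inclusions directly, with no appeal to spectral or rank properties of graph Laplacians, and it generalizes verbatim to any rank-one oblique projection $ab^\trans/(b^\trans a)$ with $b^\trans a\neq 0$. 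What the paper's route buys is consistency with the graph-theoretic language used throughout (the matrix $L$ reappears as a consensus-type object), while yours is shorter, self-contained, and makes transparent that the only role of $Q>0$ is to keep the normalizing scalar nonzero (one could note, as you implicitly do via $P\neq 0$, that $\mathcal{R}(P)$ is exactly, not just contained in, the span of $Q^{-1}\allones$). Either proof would serve the paper equally well.
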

\begin{proof}
First consider the matrix
$
{-\allones^T Q^{-1} \allones}\cdot (Q^{-1}\frac{ \allones \allones^T}{\allones^T Q^{-1} \allones}- I_n),
$
which takes the expression
\[
L^T= \left(\ba{rrrr}
L_{11}^T & -q_1^{-1} & \ldots & -q_1^{-1}\\
-q_2^{-1} & L_{22}^T & \ldots & -q_2^{-1}\\
\vdots &\vdots &\vdots &\vdots \\
-q_n^{-1} & -q_n^{-1} & \ldots & L_{nn}^T\\
\ea\right),
\]
where
$L_{ii}^T= (\sum_{j \in \mathcal{V} \backslash \{{i}\}} q_j^{-1})$. Hence, $L$ is the Laplacian matrix of a weighted complete graph. The rank of the Laplacian matrix of a connected graph is $n-1$. Thus the rank of the
matrix $L^T$ is also $n-1$. Since the rank of a matrix is not altered by the multiplication by a nonzero constant, one infers that the matrix $(Q^{-1}\frac{ \allones \allones^T}{\allones^T Q^{-1} \allones} - I_n)$ has rank $n-1$ as well.
Thus its null space has dimension $1$. Now, it is easily checked
that the range of $Q^{-1}\allones$ is included in the null space of $(Q^{-1}\frac{ \allones \allones^T}{\allones^T Q^{-1} \allones} - I_n)$.
\end{proof}
\vspace{1em}
From Lemma \ref{lemnullspace} it follows that the time varying component $\Pi_2 \sigma_2$ of the unknown demand must satisfy
$
\Pi_2 \sigma_2 \in \mathcal{R}(Q^{-1}\allones).
$
This leads to the following model for the power demand
\be\label{exosystem.revised}
\ba{rll}
\dot \sigma_1 &=&0 \\
\dot \sigma_2 &=& s_2(\sigma_2)\\
P^l &=& \Pi_1 \sigma_1 +Q^{-1}\allones R_2 \sigma_2,
\ea\ee
where $\Pi_1$ is a diagonal matrix, $R_2$ is some suitable row vector such that the pair $(R_2,S_2)$ is observable and that $Q^{-1}\allones R_2 \sigma_2$ generates the desired time-varying component of the power demand. Notice that the frequencies of the sinusoidal modes in the power demand have to be the same for all nodes.
As a result, if we consider the contribution of the time-varying component of the disturbance to the optimal steady-state controller, it must be true that
\be\label{optimal.u3}
\overline u_2 =
Q^{-1} \dst\frac{\allones\allones^\trans \Pi_2 \sigma_2}{\allones^\trans Q^{-1}\allones}=
Q^{-1} \allones R_2 \sigma_2,
\ee
where we have exploited the identity $\Pi_2 \sigma_2= Q^{-1}\allones R_2 \sigma_2$. This identity will also be used later in the section.
This characterization points out that, for the existence of a steady state solution with a zero frequency deviation in the presence of time-varying demand, the  exchange of power among the different areas must be constant at steady state and this requires that the intensity of the power demand at one aggregate area should be inversely proportional to the power production cost at the same area. We stress that this is not a limitation of the approach pursued in the paper, but rather a constraint imposed by the model of the power network and the optimal zero frequency regulation problem. We are now ready to state the main result of this section:
\begin{thm}\label{thm3}
Let Assumptions \smod \ref{assum3}, \ref{assum4} and \ref{assum5} \emod hold and suppose that there exists a solution to the regulator equations  (\ref{re-multi-machine-overall2}) with $P^l$ as in \eqref{exosystem.revised}. Then,
given the system
(\ref{multi-machine-overall2}), with exogenous power demand $P^l$ generated by \eqref{exosystem.revised},
with $s_2(\sigma_2)=S_2 \sigma_2$, $S_2$ skew-symmetric and with purely imaginary eigenvalues\footnote{The zero does not belong to the spectrum of $S_2$.}, and $(R_2, S_2)$ an observable pair,  the controllers at the nodes
\be\label{contr.tv}\ba{rcl}
\dot \theta_{1i} &=& \sum_{j\in \mathcal{N}_i^{comm}} ( \theta_{1j}-\theta_{1i}) -q_i^{-1} \omega_i \\
\dot \theta_{2i} &=& S_2\theta_{2i} - q_i^{-1} R_2^T \omega_i \\
u_i &=& q_i^{-1} \theta_{1i}+ q_i^{-1} R_2  \theta_{2i},
\ea\ee
for all $i=1,2,\ldots, n,$ guarantee the solutions to the closed-loop system that start  in a neighborhood of $(\overline \eta, \overline \omega, \overline V, \smodc \overline \theta \emodc)$ to converge asymptotically to the largest invariant set where $\omega_i=0$ for all $i=1,2,\ldots, n$, $\|\nabla_V W_2 \| = 0$ and $u=\overline u$, with $\overline u$ the optimal feedforward input.
\end{thm}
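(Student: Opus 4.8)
The plan is to mirror the proof of Theorem~\ref{th5}, exploiting the output-strict incremental passivity of the plant established in Theorem~\ref{c1} together with an incrementally passive internal-model controller, the new feature being that the feedforward input $\overline u$ and part of the controller state are now time-varying. First I would identify the reference solution: the constant triple $(\overline\eta,\overline\omega,\overline V)$ with $\overline\omega=\mathbf 0$ furnished by the regulator equations~(\ref{re-multi-machine-overall2}) (assumed solvable, and with $\overline\eta,\overline V$ as in Assumption~\ref{assum4}), together with the controller reference $\overline\theta=(\overline\theta_1,\overline\theta_2)$ given by $\overline\theta_1=\tfrac{\allones\allones^\trans\Pi_1\sigma_1}{\allones^\trans Q^{-1}\allones}\in\mathcal R(\allones)$ and $\overline\theta_{2i}=\sigma_2$ for each $i$. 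With $\omega=\mathbf 0$ the controller~(\ref{contr.tv}) reduces to $\dot\theta_1=-L_{comm}\theta_1$ and $\dot\theta_{2i}=S_2\theta_{2i}$, so $\overline\theta_1$ stays constant (as $L_{comm}\allones=\mathbf 0$) and $\overline\theta_{2i}(t)=\sigma_2(t)$; the corresponding output reproduces exactly $\overline u=\overline u_1+\overline u_2$ with $\overline u_1=Q^{-1}\overline\theta_1$ and $\overline u_2=Q^{-1}\allones R_2\sigma_2$ as in~(\ref{optimal.u3}). This internal-model property pins down the time-varying feedforward to which I want the closed loop to converge.

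Next I would build the storage function. To the plant storage $U=W_1+W_2$ of Theorem~\ref{c1} I add one term per controller block, $\Theta_1=\tfrac12\|\theta_1-\overline\theta_1\|^2$ and $\Theta_2=\tfrac12\sum_i\|\theta_{2i}-\overline\theta_{2i}\|^2$. Writing $e_1=\theta_1-\overline\theta_1$, $e_{2i}=\theta_{2i}-\overline\theta_{2i}$ and using $u_1-\overline u_1=Q^{-1}e_1$ and $u_{2i}-\overline u_{2i}=q_i^{-1}R_2 e_{2i}$, a direct computation gives $\dot\Theta_1=-e_1^\trans L_{comm}e_1-(u_1-\overline u_1)^\trans\omega$, while skew-symmetry of $S_2$ (so $e_{2i}^\trans S_2 e_{2i}=0$) yields $\dot\Theta_2=-(u_2-\overline u_2)^\trans\omega$. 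Setting $Z=U+\Theta_1+\Theta_2$ and using $\dot U=-\omega^\trans A\omega-\|\nabla_V W_2\|^2_{T^{-1}}+\omega^\trans(u-\overline u)$ from Theorem~\ref{c1} with $\overline\omega=\mathbf 0$, all the cross terms $\omega^\trans(u-\overline u)$ cancel against the $-(u-\overline u)^\trans\omega$ produced by the two controller blocks, leaving $\dot Z=-\omega^\trans A\omega-\|\nabla_V W_2\|^2_{T^{-1}}-e_1^\trans L_{comm}e_1\le 0$. Since, under Assumption~\ref{assum3}, $Z$ has a strict local minimum at the reference (constant in $(\eta,\omega,V)$, quadratic and positive definite in $e_1,e_2$), a compact sublevel set is forward invariant.

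Finally I would invoke LaSalle. To handle the explicit time dependence through $P^l(t)$ and $\overline\theta_2(t)$ I augment the closed loop with the exosystem state $\sigma$, obtaining an autonomous system to which the invariance principle applies; trajectories converge to the largest invariant set on which $\dot Z=0$, that is $\omega=\mathbf 0$, $\|\nabla_V W_2\|=0$ (hence $V=\tilde V$, and $\dot\eta=D^\trans\omega=\mathbf 0$ so $\eta=\tilde\eta$, both constant), and $e_1\in\mathcal N(L_{comm})=\mathcal R(\allones)$ by Assumption~\ref{assum5}, i.e. $\theta_1=\overline\theta_1+\allones\alpha$. The genuinely new difficulty, and the main obstacle, is that $\Theta_2$ contributes no dissipation, so LaSalle does not directly give $\theta_2=\overline\theta_2$: on the invariant set the oscillator error merely obeys $\dot e_{2i}=S_2 e_{2i}$ and need not decay. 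I would resolve this by evaluating the $\omega$-equation of~(\ref{multi-machine-overall2}) on the set: with $\omega=\mathbf 0$, $\dot\omega=\mathbf 0$ and $\eta,V$ constant, $u(t)=D\Gamma(\tilde V)\boldsymbol{\sin}(\tilde\eta)+P^l(t)$, so the only time-varying part of $u$ must equal $Q^{-1}\allones R_2\sigma_2(t)=\overline u_2(t)$; since $\theta_1=\overline\theta_1+\allones\alpha$ makes $u_1$ \emph{constant}, this forces the stacked $q_i^{-1}R_2 e_{2i}=u_{2i}-\overline u_{2i}$ to be constant in time. Differentiating $R_2 e_{2i}(t)=\mathrm{const}$ along $\dot e_{2i}=S_2 e_{2i}$ gives $R_2 S_2^{k}e_{2i}=0$ for all $k\ge 1$; invertibility of $S_2$ (its spectrum excludes $0$) lets me set $w_i=S_2 e_{2i}$, whence $R_2 S_2^{k}w_i=0$ for all $k\ge 0$, so observability of $(R_2,S_2)$ forces $w_i=\mathbf 0$ and therefore $e_{2i}=\mathbf 0$, i.e. $\theta_2=\overline\theta_2$ and $u_2=\overline u_2$. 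A final $\allones^\trans$-projection of $u_1=D\Gamma(\tilde V)\boldsymbol{\sin}(\tilde\eta)+\Pi_1\sigma_1$, using $\allones^\trans D=\mathbf 0$, gives $\allones^\trans Q^{-1}\allones\,\alpha=0$, hence $\alpha=0$ and $u_1=\overline u_1$; combining, $u=\overline u_1+\overline u_2=\overline u$, which is the claim.
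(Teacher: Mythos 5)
Your proposal is correct and follows essentially the same route as the paper: the same incremental storage $Z=U+\Theta_1+\Theta_2$ with stabilizing feedback $v_1=v_2=-\omega$, the same dissipation identity $\dot Z=-\omega^\trans A\omega-\|\nabla_V W_2\|^2_{T^{-1}}-e_1^\trans L_{comm}e_1$, and the same LaSalle invariant-set analysis ending with the $\allones^\trans$-projection that kills $\alpha$. Your only departures are cosmetic refinements of the paper's argument: you make explicit the exosystem augmentation that the paper achieves implicitly by carrying $\dot{\overline\theta}_2=\overline s_2(\overline\theta_2)$ in the closed-loop incremental dynamics, and your differentiation/observability computation ($R_2S_2^k e_{2i}=0$ for $k\ge 1$, then invertibility of $S_2$) is a sharper algebraic rendition of the paper's step that the constant $c_i=q_i^{-1}R_2(\theta_{2i}-\overline\theta_{2i})$ must vanish because the signal contains only sinusoidal modes.
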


\begin{proof}
We follow the proof of Theorem \ref{th5} {\it mutatis mutandis}. For the sake of generality we continue to use $s_2(\sigma_2)$ instead of $S_2 \sigma_2$, referring to the latter only for those passages in the proof where the linearity of the map $s_2$ simplifies the analysis.
We consider  controllers at the  nodes  of the form \eqref{contr.tv}
where the first term of $u_i$ is inspired by the analogous term in the case of constant power demand (see Theorem \ref{th5}) while the second term is suggested by
\eqref{optimal.u3}.
In stacked form, with $\omega = \mathbf{0}$, the controllers write as
\[\ba{rcl}
\dot \theta_1 &=& -L_{comm} \theta_1 \\
\dot \theta_2 &=& \overline{s}_2 (\theta_2) \\
u &=& Q^{-1} \theta_1+ Q^{-1} (I_n\otimes R_2) \theta_2,
\ea\]
where $\overline{s}_2(\theta)=({s}_2(\theta_{21})^T\ldots {s}_2(\theta_{2n})^T)^T$, $s_2(\cdot)$ is the subvector of $s(\cdot)$ that generates the time-varying component of $\sigma$ and $\theta_2=(\theta_{21}^T\ldots \theta_{2n}^T)^T$
.\footnote{In the case $s_2(\theta_n) = S_2\theta_n$, we have $\overline{s}_2(\theta) = (I_n \otimes S_2)\theta_2$.}
Under appropriate initialization, the system above generates the optimal feedforward input $\overline u$. In fact, if $\overline \theta_1(0)= \frac{\allones\allones^\trans  \Pi_1 \sigma_1(0)}{\allones^\trans Q^{-1}\allones}$, $\overline \theta_2(0)= \allones\otimes \sigma_2(0)$,
then $Q^{-1}\overline  \theta_1+ Q^{-1} (I_n\otimes R_2) \overline \theta_2$, where $\overline  \theta_1, \overline  \theta_2$ satisfy
\[\ba{rcl}
\mathbf{0} &=& -L_{comm} \overline\theta_1\\
\dot {\overline\theta}_2 &=& \overline{s}_2 (\overline\theta_2),
\ea\]
coincides with $\overline u$ defined in \eqref{optimal.u2}.
Following \cite{burger.depersis.aut13}, the stabilizing inputs $v_1$ and $v_2$ are introduced in the controller above to make it incrementally passive. We obtain
\be\ba{rcl}\label{tvinput}
\dot \theta_1 &=& -L_{comm} \theta_1+Q^{-1} v_1\\
\dot \theta_2 &=& \overline{s}_2 (\theta_2) +(I_n\otimes R_2^T) Q^{-1}  v_2\\
u &=& Q^{-1} \theta_1+ Q^{-1} (I_n\otimes R_2) \theta_2.
\ea\ee
The incremental storage function
\[
\Theta(\theta,\overline \theta)=\frac{1}{2} (\theta_1-\overline\theta_1)^T (\theta_1-\overline\theta_1)
+\frac{1}{2} (\theta_2-\overline\theta_2)^T (\theta_2-\overline\theta_2)
\]
satisfies
\be\ba{rll}
\dot \Theta(\theta,\overline \theta)&=&-(\theta_1-\overline\theta_1)
L_{comm}(\theta_1-\overline\theta_1)^T \\&&+(\theta_1-\overline\theta_1)Q^{-1}  v_1 \\&&+
(\theta_2-\overline\theta_2)^T(\overline{s}_2 (\theta_2)-\overline{s}_2 (\overline\theta_2))\\&& +(\theta_2-\overline\theta_2)^T(I_n\otimes R_2^T) Q^{-1}  v_2. \nonumber
\ea\ee
Consider  the incremental storage function
\be\ba{rll} \label{sf_control}
Z(\eta, \overline\eta, \omega, \overline\omega, V, \overline V,\theta,\overline\theta)&=&U(\eta, \overline\eta,\omega, \overline\omega, V, \overline V)\\&&+\Theta(\theta,\overline\theta),
\ea\ee
where $( \overline \eta, \overline V)$ fulfills Assumption \ref{assum3}.
Following the arguments of Lemma \ref{lemHessian}, it is immediate to see that under condition  (\ref{eq.assum3}) 
we have that $\nabla Z|_{\eta= \overline \eta, \omega=\overline \omega, V=\overline V, \theta = \overline \theta} = 0$ and
 $\nabla^2 Z|_{\eta= \overline \eta, \omega=\overline \omega, V=\overline V, \theta = \overline \theta} > 0$,
such that $Z$ has a strict local minimum at  $( \overline \eta, \overline \omega, \overline V, \overline \theta)$.
Under the stabilizing feedback
$
v_1 = -(\omega-\overline \omega),\quad v_2=-(\omega-\overline \omega),
$
the function $Z(\omega, \overline\omega, \eta, \overline\eta, V, \overline V, \theta,\overline\theta)=U(\omega, \overline\omega, \eta, \overline\eta, V, \overline V)+\Theta(\theta,\overline\theta)$ along the solutions to
\be\label{cl.mixed3}
\ba{rcl}
\dot{\eta} &=& D^T  \omega\\
\dot{\overline\eta} &=& \mathbf{0}\\
M\dot{ \omega} &=& -A   \omega - D(\Gamma(\tilde V)\boldsymbol{\sin}(\eta)-\Gamma(\overline V)\boldsymbol{\sin}(\overline\eta)) \\&&+
Q^{-1}(\theta_1-\overline\theta_1)+ Q^{-1}
(I_n\otimes R_2)(\theta_2-\overline\theta_2)
\\
\dot{\overline \omega} &=& \mathbf{0}\\
T\dot{V} &=& -E(\eta)V + \overline E_{fd} \\
\dot{\overline V} &=& \mathbf{0} \\
\dot \theta_1 &=& -L_{comm} \theta_1-Q^{-1} \omega\\
\dot {\overline\theta}_1 &=& \mathbf{0}\\
\dot \theta_2 &=& \overline{s}_2 (\theta_2) -(I_n\otimes R_2^T) Q^{-1}  \omega\\
\dot {\overline\theta}_2 &=& \overline{s}_2 (\overline\theta_2)\nonumber
\ea
\ee
satisfies
\[\ba{rll}
\dot Z &=& -(\omega-\overline\omega)^\trans A (\omega- \overline\omega) -\|\nabla_{V} W_2\|^2_{T^{-1}} \\&&-(\theta_1-\overline\theta_1)^T L_{comm} (\theta_1-\overline\theta_1),
\ea\]
where we have exploited the identities
\be\ba{rll}
u_1-\overline u_1 &=& Q^{-1}(\theta_1-\overline\theta_1)\\
u_2-\overline u_2 &=& Q^{-1}
(I_n\otimes R_2)(\theta_2-\overline\theta_2).\nonumber
\ea\ee
 As $\dot{Z} \leq 0$, one infers convergence to the largest invariant set of points where $\omega=\mathbf{0}$, $\|\nabla_{V} W_2\|=\mathbf{0}$, $\theta_1=\overline \theta_1 +\allones \alpha$, where $\alpha:\R_{\ge 0}\to \R$ is a function.
On the invariant set the dynamics take the form


\be\label{cl.mixed4}
\ba{rcl}
\dot{\eta} &=& \mathbf{0}\\
\mathbf{0} &=&  - D(\Gamma(\tilde V)\boldsymbol{\sin}(\eta)-\Gamma(\overline V)\boldsymbol{\sin}(\overline\eta))\\&&+
Q^{-1}\allones \alpha+ Q^{-1}(I_n\otimes R_2)(\theta_2 - \overline \theta_2 )
\\
\mathbf{0} &=& -E(\eta)\tilde V + \overline E_{fd} \\
\dot{\overline \theta}_1 + \allones \dot \alpha &=& -L_{comm} (\overline \theta_1 + \allones  \alpha)\\
\dot{\theta}_2 - \dot {\overline\theta}_2 &=& \overline{s}_2 (\theta_2 - \overline \theta_2 ),
\ea
\ee
where $\tilde V$ is a constant.  From $\dot{\eta} = \mathbf{0}$ it follows that on the invariant set  $\eta$ is a constant $\eta = \tilde \eta$. From the fourth line in (\ref{cl.mixed4}) we infer that $\alpha$ is a constant. The second line with $\eta = \tilde \eta$ then implies that $q^{-1}_{i}R_2(\theta_{2i} - \overline \theta_{2i} ) = c_{i}$ is a constant as well. Since the term $R_2(\theta_{2i} - \overline \theta_{2i} )$ contains only sinusoidal modes, necessarily $c_{i} = 0$ and from the pair $(R_2, S_2)$ being observable it follows that $\theta_{2i} = \overline \theta_{2i}$. Equal to the proof of Theorem \ref{th5}, pre-multiplying the second line in (\ref{cl.mixed4}) by $\allones^T$ shows that $\alpha=0$ and therefore that $\theta_1 = \overline \theta_1$. We can now conclude that $u_1= \overline u_1$ and $u_2 = \overline u_2$, that is the input $u$ converges to the optimal (time-varying) feedforward input, as claimed. \stopmodif
%
\end{proof}
\subsection{Frequency regulation in the \smod presence \emod of a wider class of time-varying power demand}\label{section6.2}
We continue the previous subsection by considering frequency regulation in the case the power demand is generated by the exosystem
\be\label{exosystem.revised2}
\ba{rll}
\dot \sigma_1 &=&0 \\
\dot \sigma_2 &=& s_2(\sigma_2)\\
\dot \sigma_3 &=& \overline s_3(\sigma_3)\\
P^l &=& \Pi_1 \sigma_1 +Q^{-1}\allones R_2 \sigma_2 + \overline R_3 \sigma_3,
\ea\ee
where additionally to (\ref{exosystem.revised}) we have
$\overline{s}_3(\theta)=({s}_{31}(\theta_{31})^T\ldots {s}_{3n}(\theta_{3n})^T)^T$ and $\overline R_3 = \text{block.diag}(R_{31},\hdots,R_{3n})$. Notice that $s_{3i}(\theta_{3i})$ and $R_{3i}$ can now vary from node to node. As shown in the previous subsection $\overline u$ cannot satisfy (\ref{optimal.u2}) any longer due to the presence of $\sigma_3$. However, compensating for $\sigma_3$ is still a meaningful control task, for otherwise the frequency deviation would not converge to zero any longer. Furthermore, for those cases for which the component $\Pi_1 \sigma_1 +Q^{-1}\allones R_2 \sigma_2$ is much greater in magnitude than  $\overline R_3 \sigma_3$,
$\overline u$ will satisfy (\ref{optimal.u2}) approximately. In order to regulate the frequency deviation to zero when the power demand is generated by (\ref{exosystem.revised2}) we propose controllers inspired by the previous subsection and we adjust the proof of Theorem \ref{thm3} accordingly.

\begin{corollary} \label{coroverall}
Let Assumptions \smod \ref{assum3}, \ref{assum4} and \ref{assum5} \emod hold and suppose that there exists a solution to the regulator equations  (\ref{re-multi-machine-overall2}) with $P^l$ as in \eqref{exosystem.revised2}. Then,
given the system
(\ref{multi-machine-overall2}), with exogenous power demand $P^l$ generated by \eqref{exosystem.revised2}$, s_k(\sigma_k)=S_k \sigma_k$, $S_k$ skew-symmetric and with purely imaginary eigenvalues for $k=2,3$, and
$((R_2\;R_{3i}), \text{block.diag}(S_2,S_{3i}))$ an observable pair,    the controllers at the nodes
\be\label{contr.tvcol}\ba{rcl}
\dot \theta_{1i} &=& \sum_{j\in \mathcal{N}_i^{comm}} (\theta_{1j}-\theta_{1i}) -q_i^{-1} \omega_i \\
\dot \theta_{2i} &=& S_2\theta_{2i} - q_i^{-1} R_2^T \omega_i \\
\dot \theta_{3i} &=& S_{3i}\theta_{3i}  -  R_{3i}^T \omega_i \\
u_i &=& q_i^{-1} \theta_{1i}+ q_i^{-1} R_2  \theta_{2i} + R_{3i}  \theta_{3i},
\ea\ee
for all $i=1,2,\ldots, n,$ guarantee the solutions to the closed-loop system that start in a neighborhood of $(\overline \eta, \overline \omega, \overline V, \overline \eta)$ to converge asymptotically to the largest invariant set where $\omega_i=0$ for all $i=1,2,\ldots, n$, $\|\nabla_V W_2 \| = 0$ and $u=\overline u = \overline u_1 + \overline u_2 + \overline u_3$.
\end{corollary}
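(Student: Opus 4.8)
The plan is to follow the proof of Theorem \ref{thm3} \emph{mutatis mutandis}, augmenting the controller state with the component $\theta_3$ responsible for compensating the non-optimally-assignable demand $\overline R_3\sigma_3$. First I would write the controller (\ref{contr.tvcol}) in stacked form with stabilizing injections, namely
\[\ba{rcl}
\dot \theta_1 &=& -L_{comm}\theta_1 + Q^{-1}v_1\\
\dot \theta_2 &=& \overline s_2(\theta_2) + (I_n\otimes R_2^T)Q^{-1}v_2\\
\dot \theta_3 &=& \overline s_3(\theta_3) + \overline R_3^T v_3\\
u &=& Q^{-1}\theta_1 + Q^{-1}(I_n\otimes R_2)\theta_2 + \overline R_3\theta_3,
\ea\]
and verify the internal-model property: with $v_k=\mathbf{0}$ and the initialization $\overline\theta_1(0)=\frac{\allones\allones^T\Pi_1\sigma_1}{\allones^T Q^{-1}\allones}$, $\overline\theta_2(0)=\allones\otimes\sigma_2(0)$ and $\overline\theta_3(0)=\sigma_3(0)$, the open-loop output $Q^{-1}\overline\theta_1 + Q^{-1}(I_n\otimes R_2)\overline\theta_2 + \overline R_3\overline\theta_3$ reproduces identically the optimal feedforward input $\overline u=\overline u_1+\overline u_2+\overline u_3$ solving the regulator equations (\ref{re-multi-machine-overall2}), which exists by hypothesis.

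Next I would render the controller incrementally passive and close the loop with $v_1=v_2=v_3=-(\omega-\overline\omega)$, then form the composite storage function $Z=U+\Theta$, with $U=W_1+W_2$ from Theorem \ref{c1} and $\Theta(\theta,\overline\theta)=\tfrac{1}{2}\sum_{k=1}^{3}\|\theta_k-\overline\theta_k\|^2$. By Assumption \ref{assum3} (via the Hessian argument of Lemma \ref{lemHessian}) $Z$ has a strict local minimum at $(\overline\eta,\overline\omega,\overline V,\overline\theta)$. The skew-symmetry of $S_2$ and $S_{3i}$ kills the drift terms $(\theta_k-\overline\theta_k)^T(\overline s_k(\theta_k)-\overline s_k(\overline\theta_k))$, and the process/controller cross terms cancel exactly as in Theorem \ref{thm3}, leaving
\[
\dot Z = -(\omega-\overline\omega)^T A(\omega-\overline\omega) - \|\nabla_V W_2\|^2_{T^{-1}} - (\theta_1-\overline\theta_1)^T L_{comm}(\theta_1-\overline\theta_1) \leq 0.
\]
LaSalle's invariance principle then drives trajectories to the largest invariant set where $\omega=\mathbf{0}$, $\|\nabla_V W_2\|=0$, and, by connectedness of the communication graph (Assumption \ref{assum5}, $\mathcal N(L_{comm})=\mathcal R(\allones)$), $\theta_1=\overline\theta_1+\allones\alpha$.

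The crux, and the step I expect to be the main obstacle, is the analysis on this invariant set, where the \emph{two} distinct sinusoidal controller components $\theta_2$ and $\theta_3$ must both be shown to vanish simultaneously. On the set $\eta=\tilde\eta$ is constant, the $\theta_1$-dynamics force $\alpha$ constant, so the power-balance equation forces $Q^{-1}(I_n\otimes R_2)(\theta_2-\overline\theta_2)+\overline R_3(\theta_3-\overline\theta_3)$ to be constant; at node $i$ this reads $q_i^{-1}R_2(\theta_{2i}-\overline\theta_{2i})+R_{3i}(\theta_{3i}-\overline\theta_{3i})=c_i$. Since $S_2$ and $S_{3i}$ have purely imaginary, nonzero eigenvalues, the left-hand side is a superposition of sinusoids with no constant mode, so necessarily $c_i=0$ and the signal is identically zero. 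It is precisely here that the joint observability of the pair $((R_2\;\,R_{3i}),\text{block.diag}(S_2,S_{3i}))$ is invoked — the nonzero scaling $q_i^{-1}$ of the $R_2$-block leaves observability intact — allowing me to conclude $\theta_{2i}=\overline\theta_{2i}$ and $\theta_{3i}=\overline\theta_{3i}$. Finally, pre-multiplying the balance equation by $\allones^T$ and using $\allones^T D=\mathbf{0}$ yields $(\allones^T Q^{-1}\allones)\alpha=0$, hence $\alpha=0$ and $\theta_1=\overline\theta_1$. Consequently $u=\overline u_1+\overline u_2+\overline u_3=\overline u$ and the frequency deviation is regulated to zero, as claimed.
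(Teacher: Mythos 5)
Your proof is correct and takes essentially the same route as the paper, whose own proof simply adds $\frac{1}{2}(\theta_3-\overline\theta_3)^\trans(\theta_3-\overline\theta_3)$ to the storage function (\ref{sf_control}) and repeats the argument of Theorem \ref{thm3} verbatim. Your invariant-set analysis---showing $q_i^{-1}R_2(\theta_{2i}-\overline\theta_{2i})+R_{3i}(\theta_{3i}-\overline\theta_{3i})$ is constant, concluding it vanishes because the spectra are purely imaginary and nonzero, and invoking joint observability of $((R_2\;R_{3i}),\text{block.diag}(S_2,S_{3i}))$ before pre-multiplying by $\allones^\trans$ to kill $\alpha$---merely makes explicit the details the paper compresses into ``in a similar way as in the proof of Theorem \ref{thm3}.''
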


\begin{proof}
\startmodif
By adding $\frac{1}{2}(\theta_3 - \overline \theta_3)^T (\theta_3 - \overline \theta_3)$ to the overall storage function (\ref{sf_control}) and following the same lines of reasoning as the proof of Theorem \ref{thm3} we can conclude that the system converges to
the largest invariant set of points where $\omega=\mathbf{0}$, $\|\nabla_{V} W_2\|=\mathbf{0}$, $\theta_1=\overline \theta_1 +\allones \alpha$, where $\alpha:\R_{\ge 0}\to \R$ is a function. On the invariant set the dynamics take the form
\be
\ba{rcl}\label{invcor2}
\dot{\eta} &=& \mathbf{0}\\
\mathbf{0} &=&  - D(\Gamma(\tilde V)\boldsymbol{\sin}(\eta)-\Gamma(\overline V)\boldsymbol{\sin}(\overline\eta))\\&&+
Q^{-1}\allones \alpha+ Q^{-1}(I_n\otimes R_2)(\theta_2 - \overline \theta_2 )\\&&+ \overline R_3(\theta_3 - \overline \theta_3)
\\
\mathbf{0} &=& -E(\eta)\tilde V + \overline E_{fd} \\
\dot{\overline \theta}_1 + \allones \dot \alpha &=& -L_{comm} (\overline \theta_1 + \allones  \alpha)\\
\dot{\theta}_2 - \dot {\overline\theta}_2 &=& \overline{s}_2 (\theta_2 - \overline \theta_2 )\\
\dot{\theta}_3 - \dot {\overline\theta}_3 &=& \overline{s}_3 (\theta_3 - \overline \theta_3)\nonumber
\ea
\ee
from where we can conclude in a similar way as in the proof of Theorem \ref{thm3} that $(\theta_1, \theta_2, \theta_3)$ converges to $(\overline \theta_1, \overline \theta_2, \overline \theta_3)$ and therefore that  $u$ converges to $\overline u$.
\end{proof}
\stopmodif
\smod
\begin{rem}
The focus of this work is on the asymptotic behavior of the system. To obtain a desirable transient response, we can adjust the controller of Corollary \ref{coroverall}, by including additional controller gains $\alpha, \beta_1, \smodc \beta_2  \emodc\in \mathbb{R}_{>0}$, and $\beta_3 \in \mathbb{R}^n_{>0}$ resulting in a controller of the form
\be\ba{rcl}
\dot \theta_1 &=& -\alpha L_{comm} \theta_1 - \beta_1 Q^{-1} \omega\\
\dot \theta_2 &=& \overline s_2 (\theta_2)  - \beta_2 (I_n\otimes R_2^T) Q^{-1}  \omega\\
\dot \theta_3 &=& \overline s_3 (\theta_3)  -  \overline R_3^T \text{diag}(\beta_3) \omega\\
u &=& \beta_1 Q^{-1} \theta_1+ \beta_2 Q^{-1} (I_n\otimes R_2) \theta_2 + \text{diag}(\beta_3) \overline R_3 \theta_3. \nonumber
\ea\ee
The tuning of the various parameters depends on the system at hand and is outside of the scope of this paper. We notice however that the optimality features of the controller are preserved under the addition of the various gains.
\end{rem}

\begin{rem}
Controller (\ref{contr.tv}) (as well as the other controllers introduced in the previous sections) are designed to counteract disturbances generated by exosystems (\ref{exosystem.revised2}). These controllers are also robust to other perturbations. Consider system (\ref{multi-machine-overall2})  with a disturbance $-Q^l$ in addition to $-P^l$. Assume that  $-Q^l$ has a finite $\mathcal{L}_2$-norm, namely $\int_0^{\infty} \|Q^l(s)\|^2 ds<\infty$. In the presence of $Q^l$, the incremental model is modified in such a way that the function $Z(\omega, \overline\omega, \eta, \overline\eta, V, \overline V, \theta,\overline\theta)$ satisfies
\[\ba{rll}
\dot Z &=& -(\omega-\overline\omega)^\trans A (\omega- \overline\omega) -\|\nabla_{V} W_2\|^2_{T^{-1}} \\&&-(\theta_1-\overline\theta_1)^T L_{comm} (\theta_1-\overline\theta_1)-(\omega-\overline\omega)^\trans Q^l.
\ea\]
Further manipulations show that
\[\ba{rll}
\dot Z &\leq& -(\omega-\overline\omega)^\trans \tilde A (\omega- \overline\omega) -\|\nabla_{V} W_2\|^2_{T^{-1}} \\&&-(\theta_1-\overline\theta_1)^T L_{comm} (\theta_1-\overline\theta_1)+\gamma (Q^l)^T Q^l,
\ea\]
where $\gamma= \frac{1}{2 \varepsilon}$, $\tilde A=A - \frac{\varepsilon}{2} I$, and $\varepsilon$ is a number satisfying $0<\varepsilon<2 \min_{i} \{A_i\}$.  Integrating both sides yields
\[
Z(t)-Z(0)\le \gamma \dst\int_{0}^{t} \| Q^l(s)\|^2 ds \le \gamma  \dst\int_{0}^{\infty} \| Q^l(s)\|^2 ds.
\]
This shows that, for initial conditions of the system which are sufficiently close to a strict local minimum of $Z$ and for disturbances  $Q^l$ with a sufficiently small $\mathcal{L}_2$-norm, the solutions of the system remain in a compact level set of $Z$ and as such are bounded and exist for all time. Furthermore,
\be\ba{rll}
 (\omega(t)-\overline\omega)^\trans \tilde A (\omega(t)- \overline\omega)
 \le Z(0) + \gamma \dst\int_{0}^{\infty} \| Q^l(s)\|^2 ds. \nonumber
\ea\ee
Since the left-hand side is bounded for all time,
we have
\[\ba{c}
 \min_{i} \{\tilde A_i\} \sup_{t\ge 0} \|\omega(t)-\overline\omega\|^2\le
 Z(0) + \gamma \dst\int_{0}^{\infty} \| Q^l(s)\|^2 ds,
\ea\]
which shows the existence of  a finite  $\mathcal{L}_2$-to-$\mathcal{L}_\infty$ gain from the disturbance $Q^l$ to the frequency deviation $\omega- \overline\omega$. Similarly, one can show the existence of a finite  $\mathcal{L}_2$-to-$\mathcal{L}_2$ gain.
\end{rem}

\emod
This section contributed to the development of distributed and dynamic controllers based on an internal model design able to generate a time-varying feedforward input such that a zero frequency deviation is obtained in the presence of time-varying power demand. Furthermore we characterized the time-varying power demand that can be compensated optimally under the requirement of zero frequency regulation.
\section{Simulation case study}\label{sec.7}
We illustrate the performance of the controllers on a connected four area network (see \cite{nabavi} how a four area network equivalent can be obtained for the IEEE New England 39-bus system or the South Eastern Australian 59-bus system). This simulation is carried out on the network provided in \cite{li_connecting_2013} and its network topology is shown in Figure 1.
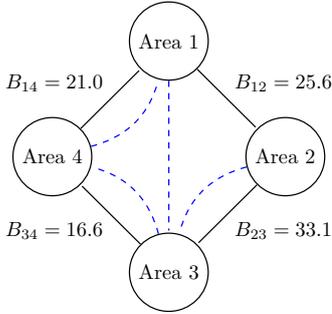
\begin{figure}[h!]
\centering
\resizebox{130pt}{!}{
\begin{tikzpicture}[-,-,shorten >=1pt,auto,node distance=2.8cm,
                    semithick]
  \tikzstyle{every state}=[fill=white,draw=black,text=black]

  \node[state] (A)                    {Area 4};
  \node[state]         (B) [above right of=A] {Area 1};
  \node[state]         (D) [below right of=A] {Area 3};
  \node[state]         (C) [below right of=B] {Area 2};

  \path[] (A) edge              node {$B_{14}=21.0$} (B)
        (B)
            edge              node {$B_{12}=25.6$} (C)
        (C) edge              node {$B_{23}=33.1$} (D)

        (D)
            edge              node {$B_{34}=16.6$} (A);
\path[dashed,blue]
  (A) edge [bend right=30]             node {} (B)
   (B) edge           node {} (D)
        (C) edge [bend right=30]              node {} (D)

        (D)
            edge [bend right=30]              node {} (A);
\end{tikzpicture}}
\caption{A four area equivalent network of the power grid, where $B_{ij}$ denotes the susceptance of the transmission line connecting two areas. The dashed lines represent the communication links.}
\end{figure}
  The values of cost coefficients, generator and transmission line parameters are a slight modification of the ones provided in \cite{li_connecting_2013}, \cite{Ourari2006} and \cite{bergen_power_2000}. An overview of the numerical values of the relevant parameters is provided in Table 2. \begin{table}\label{tab2} \centering
    \begin{tabular}{@{} cl*{4}c @{}}
        & & \rot{Area 1} & \rot{Area 2} & \rot{Area 3} & \rot{Area 4} \\
        \cmidrule{2-6}
  $M_{i}$& & 5.22& 3.98 & 4.49& 4.22\\
  $A_{i}$ && 1.60& 1.22 & 1.38& 1.42\\
  $T_{doi}$ && 5.54& 7.41& 6.11 & 6.22\\
  $X_{di}$ && 1.84& 1.62 & 1.80 & 1.94\\
  $X_{di}^{'}$ && 0.25& 0.17&0.36 & 0.44\\
  $\smodc E_{fdi} \emodc$  && 4.41& 4.20 & 4.37& 4.45\\
  $B_{ii}$ && -49.61& -61.66 & -52.17& -40.18\\
  $\smodc q_i$ && 1.00& 0.75 & 1.50& 0.50 \emodc \\
    \end{tabular}
    \caption{\smodc An overview of the numerical values used in the simulations. System parameters are provided in `per unit', except $T_{doi}$ (seconds). The generation cost coefficient $q_i$ is given in $\frac{\$10^4}{h}$.  \emodc}
\end{table}
\smod The communication among the controllers is depicted in Figure 1 as well and differs from the topology of the power grid. \emod
As a first scenario the power demand $P^l$ \smodc (per unit) \emodc is assumed to be constant and therefore the controller of Section \ref{sec.contr} is applicable. \emodc The system is initially at steady state with a constant load \smodc $P^l(t) = (2.00, 1.00, 1.50, 1.00)^T$,  $t\in [0,10)$  \emodc and according to their cost functions generators take a different share in the power generation such that the total costs are minimized. At timestep \smodc 10 \emodc the load\footnote{ We can interpret an \smod increase \emod of load also as a sudden drop of uncontrollable (renewable) generation. \emod} is increased to \smodc $P^l(t) = (2.20, 1.05, 1.55, 1.10)^T$, $t \geq 10$ \emodc. The frequency response to the control input is given in Figure 2, \smodc where the base frequency is $120 \pi $ rad/s. \emodc From Figure 2 we can see how the frequency drops due to the increased load. Furthermore we note that the controller regulates the power generation such that a new steady state condition is obtained where the frequency deviation is again zero and costs are minimized. \smod To elaborate on this we note that the generation costs at \smodc $t=100$ are $3.36 \times 10^4$ \$/h (assuming a base power of 1000 MVA)\emodc, which is substantially lower than the generation costs when every control area would produce only for its own demand \smodc ($4.79 \times 10^4$ \$/h)\emodc. \emod Since we did not include excitor dynamics in this simulation, the voltages are not regulated. Nevertheless the voltages do not deviate much from their nominal value of 1 per unit. \smod At timestep \smodc 70 \emodc the communication link between area 1 and area 3 is lost, without effecting the frequencies. This is expected since the controller states are already in consensus and it provides numerical evidence that our approach is robust to changes in the communication network as long as the graph stays connected.\emod
 \begin{figure}\label{figth5}
\centering
  \includegraphics[trim= 0cm 8cm 0cm 7.9cm, width=\columnwidth]{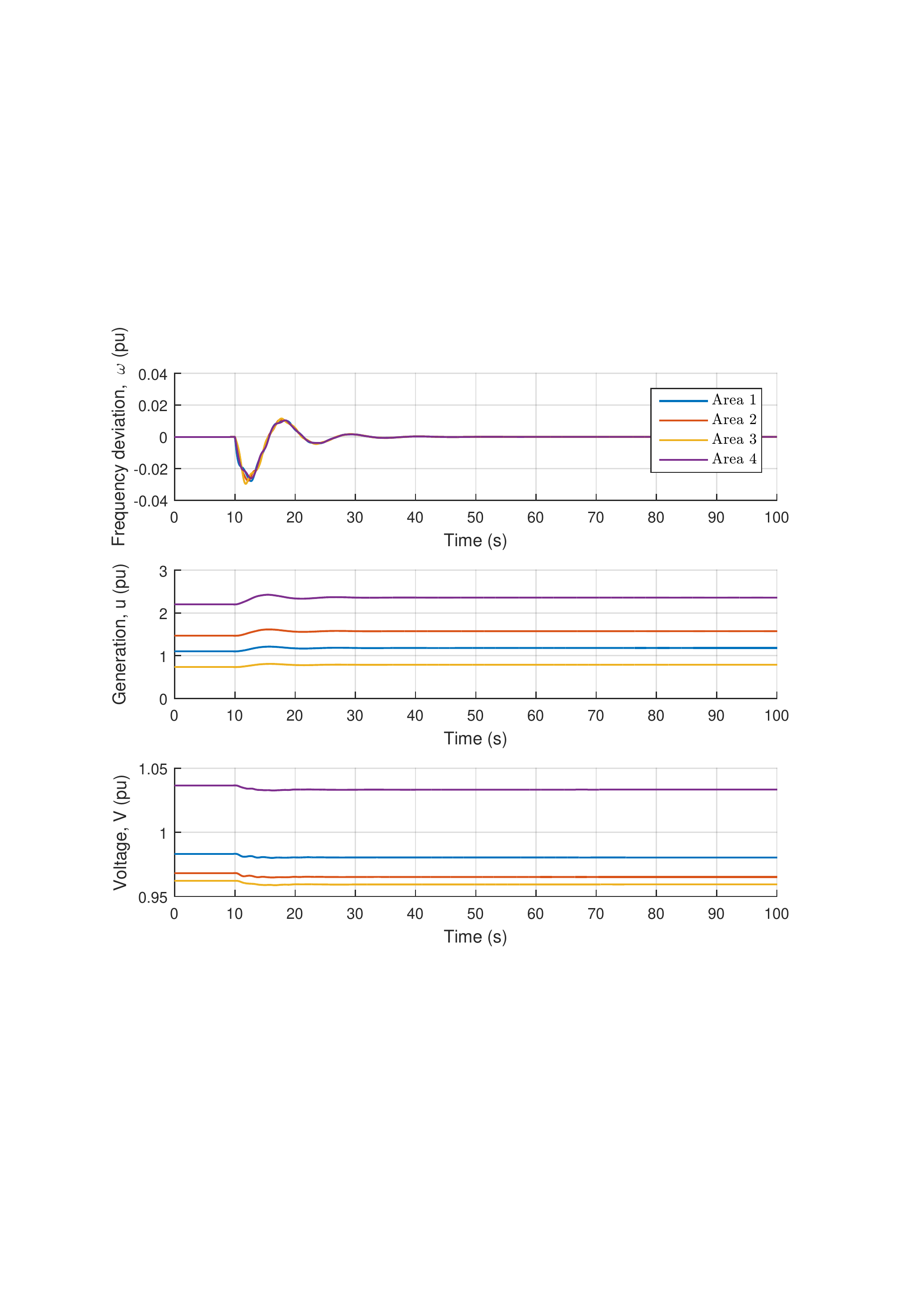}
  \caption{Frequency response and control input using the controller of Section \ref{sec.contr}. The constant load is increased at timestep 10, whereafter the frequency deviation is regulated back to zero and generation costs are minimized.
  }
\end{figure}
\vspace{1em}
As a second scenario we consider a time-varying load as in (\ref{exosystem.revised2}), where the constant demand of scenario 1 is modulated by sinusoidal terms with periods of \smodc 30 \emodc seconds. We note that the controller design does not require that all loads vary with the same frequency and is only assumed here for notational convenience.  The resulting load profile is given by \smodc $P^l(t) = (2.00, 1.00, 1.50, 1.00)^T + 0.040 \times \sin(\frac{2\pi t}{30})(1.10,1.20,0.98,1.00)^T$, $t\in[0,10)$ and $P^l(t) = (2.20, 1.05, 1.55, 1.10)^T + 0.044\times \sin(\frac{2\pi t}{30})(1.04,1.30,0.99,1.00)^T$, $t \geq 10$. \emodc Notice that the sinusoidal term does not belong to $\mathcal{R}(Q^{-1}\allones)$. Accordingly we rely on the controller proposed in Section \ref{section6.2} with $\beta_3 = 0.5\allones$ (see Remark 7) with matrices
\[\ba{ccc}
 S_{3i} = \begin{pmatrix}
  0 & \frac{2 \pi}{30} \\
  \frac{-2 \pi}{30} & 0
\end{pmatrix}, \quad R_{3i} = \begin{pmatrix}
  1 & 0
\end{pmatrix}.
\ea\]
From Figure 3 we can see how the controller provides a time-varying input such that the frequency deviation is driven to zero even in the presence of a time-varying load.
\begin{figure}\label{figth6}
\centering
  \includegraphics[width=\columnwidth, trim= 0cm 8cm 0cm 7cm]{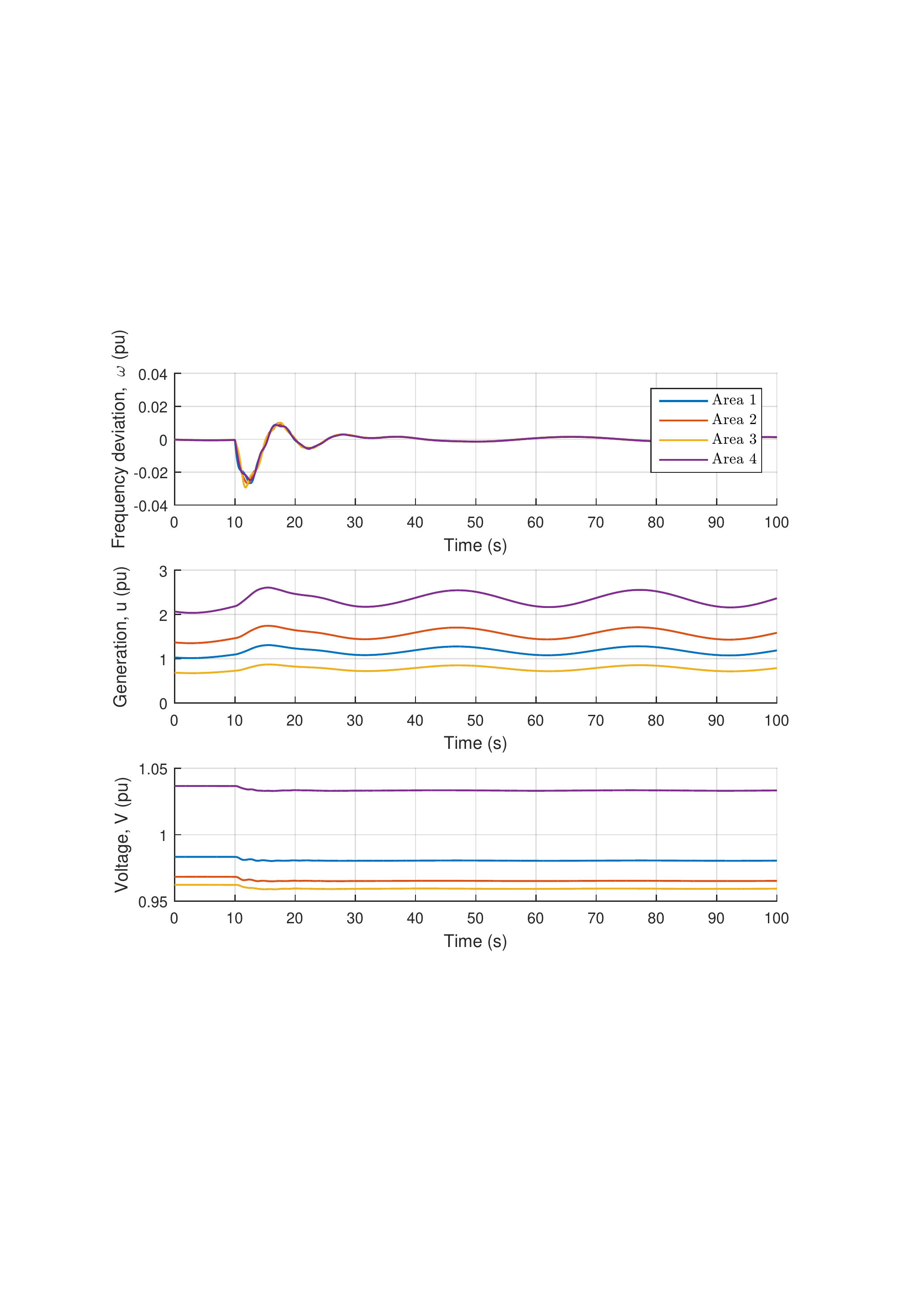}
  \caption{Frequency response and control input using the controller of Section \ref{section6.2}. The time-varying load is increased at timestep 10, whereafter the frequency is regulated back to zero. The costs associated to the constant term of the generation are minimized.}
\end{figure}
Since the time-varying load does not belong to $\mathcal{R}(Q^{-1}\allones)$, the time-varying power generation is not economically optimal anymore. One can check however that the constant term of the generation still converges to the optimum and is equal to the optimal generation in scenario 1. An example of optimal generation in the presence of a time-varying load, where the time-varying load belongs to $\mathcal{R}(Q^{-1}\allones)$, is provided in \cite{burger.et.al.mtns14b} under the assumption of constant voltages.
\section{Conclusions and future work}\label{sec.8}
We have investigated the use of  incremental passivity and internal-model-based controllers to the design of distributed controllers for frequency regulation and cost minimization in power networks.
The approach allows us to consider time-varying power demand. Energy functions which are common in classical literature on power networks have been used as incremental storage functions to analyze and the design the controllers.
%
%
\vspace{1em}

Future work will include the use of more accurate models of the power grid, such as higher order models including e.g.~exciter dynamics. \smodc The inclusion of transmission losses and associated costs will be investigated as well. \emodc There is an increasing attention for models that depart from  the classical swing equations (\cite{zonetti2012}, \cite{caliskan2014}) and interestingly enough these papers use (incremental) passivity arguments for analysis purposes. It is then very natural to wonder whether the methods proposed here can be used to solve demand-supply balancing problems for these more accurate models.
Based on results such as \cite{cox2012}, \cite{Serrani2001} larger classes of time-varying power demand  can be considered, by allowing the use of more general exosystems. 
Finally the analysis has pointed out that asking for a zero frequency deviation in the presence of time-varying power demand restricts the power demand that can be dealt with optimally. As a future research, we will investigate different problems of optimal frequency regulation where the frequency is allowed to differ from the  zero frequency by small variations.


\bibliographystyle{IEEEtran}
\bibliography{ifacbib2}

%

\end{document}